\documentclass[a4paper]{scrartcl}

\usepackage{amsmath}

\usepackage[T1]{fontenc}

\usepackage{cite}
\usepackage{amssymb,amsfonts}
\usepackage{thm-restate}
\usepackage{algorithm}
\usepackage{algpseudocode}
\usepackage{subcaption}
\usepackage{pgfplots}
\usepackage{graphicx}
\usepackage{textcomp}
\usepackage{xcolor}
\usepackage{hyperref}
\usepackage{wrapfig}
\usepackage{xifthen}
\def\BibTeX{{\rm B\kern-.05em{\sc i\kern-.025em b}\kern-.08em
    T\kern-.1667em\lower.7ex\hbox{E}\kern-.125emX}}

\newcommand{\opers}{\Omega}

\newcommand{\Nat}{\mathbb{N}}

\newcommand{\term}{\mathcal{T}}
\newcommand{\varset}{\mathcal{X}}

\newcommand{\gnd}{\operatorname{gnd}}

\newcommand{\lic}{\operatorname{lic}}
\newcommand{\size}{\operatorname{size}}
\newcommand{\vars}{\operatorname{vars}}
\newcommand{\mgu}{\operatorname{mgu}}
\newcommand{\dom}{\operatorname{dom}}
\newcommand{\cdom}{\operatorname{cdom}}

\newcommand{\shortrules}[6]{\noindent\begin{minipage}{#6ex}{\bfseries #1}\end{minipage} $\;$ #2 $\;\Rightarrow_{\text{#5}}\;$ #3 \ifthenelse{\isempty{#4}}{}{\par\noindent #4}}

\newcommand{\CCVAR}{{\text{CC($\mathcal{X}$)}}}
\newcommand{\CC}{$\text{CC}$}

\usepackage{amsthm}
\newtheorem{definition}{Definition}

\newtheorem{lemma}[definition]{Lemma}
\newtheorem{corollary}[definition]{Corollary}
\newtheorem{example}[definition]{Example}

\theoremstyle{remark}

\begin{document}
\title{Non-Ground Congruence Closure}
%
%

\author{Hendrik Leidinger,
  Christoph Weidenbach\\ 
  \large{Max Planck Institute for Informatics} \\ \large{Saarland Informatics Campus} \\ \large{Saarbr\"ucken, Germany} \\ \large{\texttt{\{hleiding,weidenb\}@mpi-inf.mpg.de}}}

%

\maketitle

\begin{abstract}
    Congruence closure on ground equations is a well-established and efficient algorithm for deciding ground equalities. 
	It constructs an explicit representation of ground equivalence classes based on a given set of input equations, 
	allowing ground equalities to be decided by membership. In many applications, these ground equations originate from grounding non-ground equations.

We propose an algorithm that directly computes a non-ground representation of ground congruence classes for non-ground equations. 
Our approach is sound and complete with respect to the corresponding ground congruence classes. Experimental results demonstrate 
that computing non-ground congruence classes often outperforms the classical ground congruence closure algorithm in efficiency.
\end{abstract}
\section{Introduction}
\label{sec:intro}


Equational logic is widely used in almost all aspects of formal reasoning about systems.
Given a set of non-ground equations, in order to apply congruence closure (\CC)~\cite{DowneySethiEtAl80}
the non-ground equations must be grounded first. Then after applying \CC\  to the grounded equations, ground equalities can be decided by
testing membership in the generated congruence classes. \CC\ is in particular useful compared to ground completion~\cite{KnuthBendix70},
if many different ground equalities are tested, in particular, with respect to the same set of input equations. For \CC\ testing an equality amounts to a membership test, for completion
normal forms need to be computed.
If the starting point for reasoning is a set of first-order equations with variables, then applying \CC\ 
amounts to grounding the set of equations by a set $\mathcal{M}$ of ground terms
resulting in an exponential blow-up in $|\mathcal{M}|$.
In this paper we introduce \CCVAR\ that prevents the a priori exponential blow-up by grounding
and operates on the original equations. Still \CCVAR\ ensures that any congruence class it derives
has ground instances in $\mathcal{M}$.

In \CCVAR\  a (congruence) class is a set of \emph{constrained terms}, also called a 
\emph{constrained class}. A constraint is a conjunction of checks resembling membership 
conditions and of the form $t\in\mathcal{M}$, where $t$ is a possibly non-ground term 
and $\mathcal{M}$ is a set of ground terms. When presenting  \CCVAR\  we assume $\mathcal{M}$ to be finite.
For example, such a set of ground terms can either explicitly represented by enumerating the considered ground
terms (as done in SMT) or implicitly through an ordering that always guarantees only finitely many smaller ground
terms given some upper bound ground term.
The \emph{constraint} restricts the number of ground instances of the \emph{constrained term} to those in $\mathcal{M}$. 
\CCVAR\ creates a solution for the whole ground input space 
$\mathcal{M}$. It consists of two main rules \emph{Merge} and \emph{Deduction} to build up
the congruence classes that are generalizations of the Merge and Deduction rule from \CC.
\emph{Merge} creates a new class by unifying two terms in different classes and applying this 
unifier to the Union of these two classes. \emph{Deduction} creates a new class by simultaneously unifying the arguments of 
two terms with the same top symbol in different classes with terms in the same class. Termination is guaranteed 
by an effective  subsumption test between two classes and the fact that all terms are constrained by $\mathcal{M}$.

As an example, consider the equations $g(x)\approx h(x)$, $h((h(y)))\approx f((h(y)))$ and 
\(\mathcal{M}= \{g(x),\allowbreak h(x), \allowbreak f(x), \allowbreak f((h(x))),\allowbreak h(h(x)),\allowbreak g(h(x)),
\allowbreak g(g(x)),\allowbreak h(g(x)), \allowbreak f(g(x))~|~\allowbreak x\in \{t_1,...,t_n\},\allowbreak\;\text{$t_i$ ground}\}\). Initially \CCVAR\ creates the classes:
\smallskip

\centerline{$\begin{array}{c}
\{ g(x), h(x)\parallel g(x), h(x)\}, \{h(h(y)),f(h(y))\parallel h(h(y)),f(h(y))\}
               \end{array}$}
\smallskip

\noindent where the left-hand side of the $\parallel$ operator is the set of constraints and the right-hand side is the set of terms.
The set $\mathcal{M}$ of ground terms is fixed, therefore for readability we simply write, e.g., $\{g(x),h(x)\parallel g(x),h(x)\}$ instead of  $\{g(x)\in\mathcal{M},h(x)\in\mathcal{M}\parallel g(x),h(x)\}$
and use the notation $\{\Gamma\parallel s_1\ldots,s_n \}$ to denote that the 
constraint $\Gamma$ holds for all terms. \CCVAR~can now merge
the classes $\{g(x),h(x)\parallel g(x),h(x)\}$ and $\{h(h(y)),f(h(y))\parallel h(h(y)),f(h(y))\}$ by unifying $h(x)$
with $h(h(y))$.
So we get $\{g(h(y)),h(h(y)), f(h(y))\parallel g(h(y)),h(h(y)), f(h(y))\}$. The new class subsumes the second merged class, so 
it can be deleted. Now we can apply three times Deduction with one of the implicitly available single term classes $\{f(y)\parallel f(y)\}$,
$\{g(y)\parallel g(y)\}$ and $\{h(y)\parallel h(y)\}$ and the class $\{g(x),h(x)\parallel g(x),h(x)\}$
to create the new class $\{f(g(x)), f(h(x))\parallel f(g(x)), f(h(x))\}$. Repeating this procedure for the other two single-term classes
we get $\{g(g(x)), g(h(x))\parallel g(g(x)), g(h(x))\}$ and $\{h(g(x)), h(h(x))\parallel h(g(x)),\allowbreak h(h(x))\}$. 
All these classes can now be merged step by step with the first merged class, whereby the involved classes are always subsumed by the 
new class. The final result is now:
\smallskip

\centerline{$\begin{array}{c}
\{g(x),h(x)\parallel g(x),h(x)\},\\
\{g(h(y)), g(g(y)), h(h(y)), h(g(y)), f(g(y)), f(h(y))\parallel\\ 
g(h(y)), g(g(y)), h(h(y)), h(g(y)), f(g(y)), f(h(y))\}
             \end{array}$}
\smallskip

Comparing the application of \CCVAR\ with \CC, grounding with $\mathcal{M}$ leads to multiple copies of the involved classes if the number of ground terms is at least higher than one
and even if the ground terms $t_i$ do not contain the function symbols $f$, $g$, $h$.
For example, if the $t_i$ represent 8 different constants, ground congruence closure would create a total number of 16 classes, whereas in \CCVAR\
the number of classes is independent of the number of constants in this example. In Section~\ref{sec:evaluation} we will show that \CCVAR\ typically
outperforms \CC.

\paragraph{Related Work:} 
To the best of our knowledge, the only algorithm that is similar to ours is Joe Hurd's
Congruence Classes with Logic Variables~\cite{Hurd01}. The algorithm creates a set of classes, where 
each class consists of multiple, possibly non-ground terms. It incrementally finds all 
matchers between all pairs of classes and applies these matchers to extend these classes.
Therefore, in order to test equality of two terms they need to be added and the algorithm restarted.
The size of terms is not constrained by the algorithm so it may diverge. There is no notion
of redundancy. The semantics of the classes is with respect to an infinite signature. In contrast,
we generate a complete classification of all considered ground terms, i.e., testing equality of two
terms means testing membership in the same class. Due to a notion of redundancy, our algorithm always terminates (Lemma~\ref{lem:cccterm}),
and it is implemented (Section~\ref{sec:implementation}).

There are two potential applications of \CCVAR\ in existing frameworks: SMT (Satisfiability Modulo Theory) and SCL(EQ) (Simple Clause Learning in first-order logic with equality).
Satisfiability modulo theory (SMT)~\cite{GanzingerHagenEtAl04,NieuwenhuisEtAl06} solvers (e.g., \cite{deMouraBjorner:08,BarbosaBBKLMMMN22,BoutonODF09,Dutertre2014,CimattiEtAl13}) make
use of congruence closure (\CC)~\cite{NelsonOppen80,DowneySethiEtAl80,Shostak84}.
It is well suited as an incremental algorithm that decides the validity of ground equations and is 
able to ﬁnd a small subset of input equations that serves as a proof, all in time $\mathcal{O}(n~log (n))$~\cite{nieuwenhuis2007fast}.
Typical applications consider a set of first-order equations with variables plus further theories.
For SMT solvers many techniques have been invented to instantiate non-ground input equations~\cite{BarbosaEtAl17,ReynoldsEtAl18, ReynoldsTinelli14} in order to make them applicable to \CC.
With \CCVAR\ the grounding could be postponed allowing reasoning modulo a larger set of grounding terms.

SCL(EQ) (Simple Clause Learning over Equations)~\cite{Leidinger2023}  is a complete calculus for first-order logic with equality that only learns 
non-redundant clauses. Similar to CDCL (Conflict Driven Clause Learning)~\cite{MSS96,BayardoSchrag96,MoskewiczMadiganZhaoZhangMalik01,BiereEtAl09handbook,Weidenbach15}
it builds an explicit model assumption (trail) of ground literals where ground instances are finitely limited by some ground term $\beta$ with respect to an ordering.
Clauses are evaluated with respect to a trail. \CCVAR\ can be immediately applied by not grounding the equations but reasoning modulo a set of ground
terms smaller than $\beta$.

For uniform word problems~\cite{DershowitzPlaisted01handbook} CC(X) can be used as an alternative semi-decision
procedure by iteratively increasing the set of considered ground terms.
It may also provide a priori guarantees, e.g., if there is a depth
bound on the overall set of ground terms that need to be considered.
Whenever the overall set of ground terms to be considered is finite CC(X) turns into
a decision procedure due to soundness, Lemma~\ref{lem:cccsound}, completeness, Lemma~\ref{lem:ccccomp}, and termination, Lemma~\ref{lem:cccterm}.

The rest of the paper is organized as follows. Section~\ref{sec:prelim} provides the necessary technical
background notions needed for the development of our algorithm. Section~\ref{sec:encoding} presents our new 
calculus including examples and proofs for soundness, Lemma~\ref{lem:cccsound}, completeness, Lemma~\ref{lem:ccccomp}, and termination, Lemma~\ref{lem:cccterm}.
Section~\ref{sec:implementation} presents
refinements of the calculus towards an implementation.
Section~\ref{sec:evaluation} compares the performance of \CCVAR\  with \CC\ by experiments.
All experiments can be reproduced by the provided supplementary material.
Section~\ref{sec:discussion} concludes the paper. An appendix contains additional examples, 
proofs and pseudo code of our implementation.

\section{Preliminaries}
\label{sec:prelim}

We assume a standard first-order language with equality and over a finite set of function symbols $\opers$, 
where the only predicate symbol is equality $\approx$.  We assume that $\opers$ contains at least one constant and
one non-constant function symbol.
$t,s,l,r$ are terms from $\term(\opers,\varset)$ for an infinite set of variables $\varset$;
$f,g,h$ function symbols from $\opers$; $a,b,c$ constants from $\opers$ and $x,y,z$ variables from $\varset$.
The function $\vars$ returns all variables of a term. The function $\#(x,t)$ returns the number of 
occurrences of a variable $x$ in $t$. 

By $\sigma,\tau,\delta,\mu$ we denote substitutions. Let $\sigma$ be a substitution, then its finite domain is defined as
$\dom(\sigma) := \{x \mid x\sigma \not= x\}$ and its codomain is defined as $\cdom(\sigma) = \{t \mid x\sigma = t, x \in \dom(\sigma)\}$. We extend their application to terms and sets of terms in the usual way.
A term, equation is \emph{ground} if it does not contain any variable.
A substitution $\sigma$ is \emph{ground} if $\cdom(\sigma)$ is ground. A substitution $\sigma$ is \emph{grounding} for a term $t$, equation $s\approx t$ if $t\sigma$, $(s\approx t)\sigma$ is ground,
respectively. 
The function $\gnd$ returns the set of all ground
instances of a term, equation or sets thereof.
The function $\mgu$ denotes the most general unifier of two terms, two equations, respectively. We assume that mgus do not introduce fresh variables and that they are idempotent.
The size of a term $t$ (equation $E$) is denoted by $\size(t)$ ($\size(E)$), which is the number
of symbols in $t$ ($E$).

Let $\preceq$ be a total quasi-ordering on ground terms where the strict part is well-founded.
The ordering is lifted to the non-ground case via instantiation: we define $t \preceq s$
if for all grounding substitutions $\sigma$ it holds $t\sigma \preceq s\sigma$.
Given a ground term $\beta$ then $\gnd_{\preceq\beta}$
computes the set of all ground instances of a term, equation, or sets thereof where all ground terms  are smaller or equal to
$\beta$ with respect to $\preceq$. 
Analogously, for a set of ground terms $\mathcal{M}$, we define $\gnd_{\mathcal{M}}$ as the set of ground instances of a term, equation or sets thereof where all ground terms
are in $\mathcal{M}$.
By $\term_{\preceq\beta}(\Omega, \emptyset)$ or just $\term_{\preceq\beta}$
we denote the set of all ground terms $\preceq\beta$.

We rely on standard first-order semantics and in particular write $E\models s\approx t$ if any model of the implicitly
universally quantified equations in $E$ is also a model for the ground equation $s\approx t$.

Let $E$ be a set of equations over $T(\Omega,\mathcal{X})$ where all
variables are implicitly universally quantified. The well-known inference
system of equational logic comprises the following rules~\cite{BaaderNipkow98}

\smallskip

\shortrules{Reflexivity}
{$E$}
{$E\cup \{t\approx t\}$}
{provided $t$ is a term.}{EQ}{12}

\smallskip

\shortrules{Symmetry}
{$E\cup \{t\approx t'\}$}
{$E\cup \{t\approx t', t'\approx t\}$}
{}{EQ}{12}

\smallskip

\shortrules{Transitivity}
{$E\cup \{t\approx t', t'\approx t''\}$}
{$E\cup \{t\approx t', t'\approx t'', t\approx t''\}$}
{}{EQ}{12}

\smallskip

\shortrules{Congruence}
{$E\cup \{t_1\approx t'_1,..., t_n\approx t'_n\}$}
{$E\cup \{t_1\approx t'_1,..., t_n\approx t'_n, f(t_1,...,t_n)\approx f(t'_1,...,t'_n)\}$}
{}{EQ}{13}

\smallskip

\shortrules{Instance}
{$E\cup \{t\approx t'\}$}
{$E\cup \{t\approx t, t\sigma\approx t'\sigma\}$}
{provided $\sigma$ is a substitution.}{EQ}{10}

\smallskip\noindent
and by Bikhoff's Theorem~\cite{BaaderNipkow98} we get for two ground terms $t,s$ : $E\models s\approx t$ iff $E\Rightarrow_{\text{EQ}}^* \{t,s\}\cup E'$.
We will base our completeness proof, Lemma~\ref{lem:ccccomp}, on $\Rightarrow_{\text{EQ}}$.


Congruence Closure~\cite{NelsonOppen80,DowneySethiEtAl80,Shostak84} is an algorithm for deciding satisfiability of ground equations. 
We present the abstract version of congruence closure as described by 
Fontaine\cite{fontaine2004techniques} and based on the algorithm by Nelson and Oppen~\cite{NelsonOppen80}.
The initial state is $(\Pi, E)$, where $\Pi$ is a partition of all ground subterms of terms in $E$, such that every term is in its
own class, and $E$ is the set of ground equations. The algorithm consists of the following three inference rules.

\smallskip

\shortrules{Delete}
{$(\{A\}\cup \Pi, E\cup \{s\approx t\})$}
{$(\{A\}\cup \Pi, E)$}
{provided $\{s,t\}\subseteq A$.}{CC}{6}

\smallskip

\shortrules{Merge}
{$(\{A,B\}\cup \Pi, E\cup \{s\approx t\})$}
{$(\{A\cup B\}\cup \Pi, E)$}
{provided $s\in A$, $t\in B$ and $A\not= B$.}{CC}{6}

\smallskip

\shortrules{Deduction}
{$(\{A,B\}\cup \Pi, E)$}
{$(\{A, B\}\cup \Pi, E\cup\{f(s_1,...,s_n)\approx f(t_1,...,t_n)\})$}
{provided $f(s_1,...,s_n)\in A$, $f(t_1,...,t_n)\in B$, $A\not= B$
 and for each $i$, there exists a $D_i\in \{A,B\}\cup \Pi$ such that $\{s_i,t_i\}\in D_i$
 and $f(s_1,...,s_n)\approx f(t_1,...,t_n)\not\in E$.}{CC}{11}

 \smallskip

The algorithm terminates if no rule is applicable anymore. The resulting set $\Pi$ represents the set
of congruence classes.

\section{CC($\mathcal{X}$): Non-Ground Congruence Closure}
\label{sec:encoding}

We now present our calculus in full detail. 
Throughout this section we assume a finite set $\mathcal{M}$ of ground terms and a finite set $E$ of
possibly non-ground equations.


\begin{definition}
  A \emph{constrained term} $\Gamma\parallel s$ is a term $s$ with a constraint $\Gamma$. 
  The \emph{constraint} $\Gamma$ is a conjunction of atoms $t\in \mathcal{M}$.
  A substitution $\sigma$ is \emph{grounding} for a constraint term $\Gamma\parallel s$ if $\Gamma\sigma$ and $s\sigma$ are ground.
\end{definition}

The constraint $\Gamma$ restricts the possible ground instances of the term $s$ to those instances $s\sigma$ such that 
$\Gamma\sigma$ evaluates to true. If it is clear from the context, we omit the $\in \mathcal{M}$ and just
write the left hand-side of the operation. A constraint class is a set of constraint terms. We distinguish 
between separating and free variables, where a separating variable occurs in all terms within the class whereas
a free variable does not.

\begin{definition}[Congruence Class]
  A \emph{congruence class} or simply class is a finite set of constraint terms $\Gamma\parallel s$.
  Let $A=\{\Gamma_1\parallel s_1,...,\Gamma_n\parallel s_n\}$ be a class. The set of \emph{separating} variables $X$ of 
  $A$ is defined as $X = \vars(s_1)\cap...\cap \vars(s_n)$.
  The set of \emph{free} variables $Y$ of $A$ is defined as $Y = (vars(s_1)\cup...\cup vars(s_n))\setminus X$.
  A substitution is \emph{grounding} for $A$ if it is grounding for all constraint terms $\Gamma_i\parallel s_i$.
\end{definition}

If the terms in a congruence class all have the same constraint then we use $\{ \Gamma\parallel s_1,...,s_n\}$
as a shorthand for $\{\Gamma\parallel s_1,...,\Gamma\parallel s_n\}$. For example, with all shorthands 
we can now write $\{g(x),h(x)\parallel g(x),h(x)\}$ instead of 
$\{g(x)\in \mathcal{M},h(x)\in \mathcal{M}\parallel g(x),g(x)\in \mathcal{M},h(x)\in \mathcal{M}\parallel h(x)\}$. In the 
calculus later on the constraints of each term within a class are always the same. Variables in a class
can always be renamed.

\begin{definition}
  Let $A=\{\Gamma_1\parallel s_1,...,\Gamma_n\parallel s_n\}$ be a class and $\mu$ a substitution. We define $A\mu$ as 
  $\{\Gamma_1\mu\parallel s_1\mu,...,\Gamma_n\mu\parallel s_n\mu\}$. In particular, if $\mu$ is grounding for $A$ we overload its application by
  $A\mu = \{s\mu~|~(\Gamma\parallel s\in A\mathit{\ and\ } \Gamma\mu\mathit{\ true})\}$.
\end{definition}

The semantics of a congruence class with variables is defined
by creating a mapping to the corresponding ground classes. It is important to distinguish between 
separating and free variables here. Separating variables divide the non-ground class into several ground classes.

\begin{definition}[Congruence Class Semantics] \label{def:classgrounding}
Let $A$ be a congruence class. Let $X$ be the separating variables of $A$.
Then the set $\gnd'(A)$ is defined as:
  $$\bigcup_{\substack{\sigma\;\mathit{ground},\\  \dom(\sigma) = X}}~  \bigl\{\{(\Gamma\sigma \parallel s\sigma)~|~ (\Gamma\parallel s) \in A\;\mathit{and}\;\Gamma\sigma\;\mathit{satisfiable}\}\bigr\}$$
and the set $\gnd(A)$ is defined as: 
  $$\bigcup_{B\in \gnd'(A)} \Bigl\{\bigcup_{\sigma\;\mathit{grounding\ for}\; B} \{ s\sigma~|~ (\Gamma\parallel s) \in B\;\mathit{and}\; \Gamma\sigma\;\mathit{true}\}\Bigr\}$$
\end{definition}

\begin{example}
    Assume $\opers =\{g,h,a ,b\}$, $\mathcal{M}=\{g(a),g(b),h(a), h(b)\}$ and
    classes 
    \[A=\{g(x), h(x)\parallel g(x),h(x)\}, B=\{g(x), h(y)\parallel g(x), h(y)\}\]
    Then 
    \[\gnd(A)=\{\{g(a), h(a)\}, \{g(b), h(b)\}\} \text{ and } \gnd(B)= \{\{g(a), h(a), g(b), h(b)\}\}\]
  \end{example}

\begin{definition}[Normal Class]
  Let $A$ be a class. The \emph{normal}
  class $\text{norm}(A)$ is defined as 
  $\{(\Gamma\land\Gamma\sigma\parallel s)~|~(\Gamma\parallel s)\in A\}
  \cup \{(\Gamma\land\Gamma\sigma\parallel s\sigma)~|~(\Gamma\parallel s)\in A$ and $s$ contains free variables$\}$
  for a renaming $\sigma$ on the free variables 
  introducing only fresh variables.
\end{definition}

A class can be turned into a normal class by generating exactly one renamed copy for all constrained terms containing free variables. The motivation for normal classes
is of technical nature. \CCVAR\ rules always operate on two terms out of a class. In case of terms with variables the two terms may be actually instances
of the same term from the class. This can only happen for terms with free variables. By introducing one renamed copy for such terms, the style of \CCVAR\ rules
is preserved and the rules do not need to distinguish between free and separated variables.
For example, the class $A = \{g(x),h(y)\parallel g(x), h(y)\}$ contains all ground terms build with top-symbols $g$ and $h$. So
for a \CCVAR\ step picking $g(a)$ and $g(b)$ out of the class the term $g(x)$ needs to be instantiated with two different constants.
By using renamed copies
$\text{norm}(A) = \{g(x),g(x'),h(y),h(y')\parallel g(x), g(x'), h(y), h(y')\}$ this technical issue is removed. Obviously, $\gnd(A) = \gnd(\text{norm}(A))$, holding
for all classes and their normal counterparts.



\begin{definition}[Subsumption] \label{def:subsumption}
  A class $B$ subsumes another class $A$ if for all $A'\in \gnd(A)$ there exists a $B'\in \gnd(B)$ such that $A'\subseteq B'$.
\end{definition}




The following definitions and lemmas prepare termination, Lemma~\ref{lem:cccterm}.
We show that we cannot create infinitely many classes restricted by $\mathcal{M}$ such that each new class is not subsumed 
by any existing class, guaranteeing termination.

\begin{definition}
  Let $A=\{\Gamma_1\parallel s_1,...,\Gamma_n\parallel s_n\}$ be a class. 
  $A$ is constrained by $\mathcal{M}$ (or $\mathcal{M}$-constrained)
  iff $(s_i{\in}\mathcal{M})\in \Gamma_i$ for all $1\leq i\leq n$.
\end{definition}

\begin{restatable}[Constraint Semantics matches $\mathcal{M}$]{lemma}{nongndtognd}
  \label{lem:nongndtognd}
  Let $A$ be a $\mathcal{M}$-constrained class.
  Then $\gnd(A)\in \mathcal{P}(\mathcal{M})$, the powerset of $\mathcal{M}$.
\end{restatable}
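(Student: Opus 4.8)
The plan is to unwind the definition of $\gnd(A)$ and show that every element it produces is a subset of $\mathcal{M}$, which is exactly the statement that $\gnd(A)\in\mathcal{P}(\mathcal{M})$. Recall from Definition~\ref{def:classgrounding} that each element of $\gnd(A)$ has the form $\bigcup_{\sigma\;\mathit{grounding\ for}\;B}\{s\sigma\mid(\Gamma\parallel s)\in B\;\mathit{and}\;\Gamma\sigma\;\mathit{true}\}$ for some $B\in\gnd'(A)$, where $B$ itself arises by applying a ground substitution on the separating variables $X$ to $A$ and keeping only the constrained terms whose constraint stays satisfiable. So the goal reduces to a single membership check: for every ground term $s\sigma$ contributed to such a union, I must verify $s\sigma\in\mathcal{M}$.

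First I would fix an arbitrary element of $\gnd(A)$ and an arbitrary ground term $s\sigma$ appearing in it. By tracing back through $\gnd'$ and $\gnd$, this term comes from some original constrained term $(\Gamma\parallel s)\in A$, composed with a separating-variable grounding and a subsequent grounding $\sigma$ for the remaining (free) variables, such that the full constraint $\Gamma\sigma$ evaluates to true. Second, I would invoke the $\mathcal{M}$-constrained hypothesis: by definition, since $(\Gamma\parallel s)\in A$, the atom $(s\in\mathcal{M})$ is one of the conjuncts of $\Gamma$. Third, because the combined grounding substitution makes $\Gamma$ evaluate to true, in particular its conjunct $s\in\mathcal{M}$ evaluates to true under that substitution, i.e.\ the ground term $s\sigma$ satisfies $s\sigma\in\mathcal{M}$. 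This is precisely the required membership.

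Having shown every contributed ground term lies in $\mathcal{M}$, I conclude that every union collected in the definition of $\gnd(A)$ is a subset of $\mathcal{M}$, hence a member of $\mathcal{P}(\mathcal{M})$. Since this holds for each $B\in\gnd'(A)$, every element of $\gnd(A)$ is in $\mathcal{P}(\mathcal{M})$, giving $\gnd(A)\subseteq\mathcal{P}(\mathcal{M})$, which is the claim as stated (reading the overloaded notation $\gnd(A)\in\mathcal{P}(\mathcal{M})$ as elementwise membership, since $\gnd(A)$ is itself a set of ground classes).

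The only real subtlety, and the step I expect to need care, is the bookkeeping of the two-stage grounding: the separating-variable substitution in $\gnd'$ and the per-class grounding $\sigma$ in $\gnd$ must be composed correctly so that what acts on the term $s$ is genuinely a \emph{grounding} substitution for the constrained term, and so that ``$\Gamma\sigma$ true'' legitimately forces the single conjunct $s\in\mathcal{M}$ to hold on the fully instantiated term. Once one is careful that the constraint atom $s\in\mathcal{M}$ is preserved (possibly strengthened) through normalization and through both substitution stages, the argument is essentially immediate from the definition of $\mathcal{M}$-constrained; there is no induction or nontrivial combinatorics involved.
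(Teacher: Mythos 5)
Your proof is correct and follows essentially the same route as the paper's: unwind $\gnd'$ and $\gnd$, observe that the constraint atom $s\in\mathcal{M}$ survives both substitution stages, and conclude membership from the fact that the fully instantiated constraint evaluates to true. The paper merely packages the first stage as the intermediate observation that every $B\in\gnd'(A)$ is itself $\mathcal{M}$-constrained; the substance is identical (your aside about normalization is not needed here, since $\text{norm}$ plays no role in this lemma).
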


\begin{restatable}[Constraint Semantics matches $\mathcal{M}$]{lemma}{finnumclscomb}
  \label{lem:finnumclscomb}
  There exists no infinite chain of (possibly non-ground) $\mathcal{M}$-constrained classes $A_0,A_1,...$
  such that for all $i\geq 0$, $A_i$ is not subsumed by any $B\in\{A_0,...,A_{i-1}\}$.
\end{restatable}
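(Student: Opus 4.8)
The plan is to reduce the statement to a pigeonhole argument on the ground semantics $\gnd(\cdot)$ of the classes, exploiting the finiteness of $\mathcal{M}$ together with Lemma~\ref{lem:nongndtognd}. The whole argument is purely semantic: since subsumption is defined via $\gnd$ (Definition~\ref{def:subsumption}), the syntactic shape of the $A_i$ and their variables are irrelevant.

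First I would record that $\gnd$ has a finite range on $\mathcal{M}$-constrained classes. By Lemma~\ref{lem:nongndtognd}, every ground class $A' \in \gnd(A)$ is a subset of $\mathcal{M}$, so $\gnd(A) \in \mathcal{P}(\mathcal{P}(\mathcal{M}))$. As $\mathcal{M}$ is finite, the iterated powerset $\mathcal{P}(\mathcal{P}(\mathcal{M}))$ is a finite set, and hence $\gnd(A)$ can take only finitely many values as $A$ ranges over $\mathcal{M}$-constrained classes.

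Second I would isolate the trivial half of subsumption: if $\gnd(A) = \gnd(B)$, then $B$ subsumes $A$. This is immediate from Definition~\ref{def:subsumption}, since for any $A' \in \gnd(A) = \gnd(B)$ we may choose $B' := A' \in \gnd(B)$, and then $A' \subseteq B'$ holds trivially. (Symmetrically $A$ subsumes $B$, so equal ground semantics forces mutual subsumption, but only one direction is needed below.)

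Third I would argue by contradiction. Suppose such an infinite chain $A_0, A_1, \ldots$ exists. The map $i \mapsto \gnd(A_i)$ sends the infinite index set into the finite set $\mathcal{P}(\mathcal{P}(\mathcal{M}))$, so by the pigeonhole principle there are indices $i < j$ with $\gnd(A_i) = \gnd(A_j)$. By the second step, $A_i$ subsumes $A_j$; but $A_i \in \{A_0, \ldots, A_{j-1}\}$, contradicting the hypothesis that $A_j$ is not subsumed by any of its predecessors. Hence no such infinite chain exists. I do not expect a genuine obstacle here: the real work has already been done in Lemma~\ref{lem:nongndtognd}, which confines $\gnd$ to a finite range, and the remaining direction of subsumption is immediate. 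The only point needing care is checking that equal ground semantics really does imply subsumption under Definition~\ref{def:subsumption}. One could instead phrase everything through the $\subseteq$-downward closures $\downarrow\gnd(A_i)$ inside $\mathcal{P}(\mathcal{M})$, for which $B$ subsumes $A$ iff $\downarrow\gnd(A) \subseteq \downarrow\gnd(B)$, turning the chain condition into a finite antichain/repetition argument; but this refinement is unnecessary, since plain pigeonholing on the finitely many values of $\gnd$ already suffices.
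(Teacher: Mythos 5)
Your proposal is correct and follows essentially the same route as the paper: both rely on Lemma~\ref{lem:nongndtognd} to confine $\gnd$ to a finite range, observe that equal ground semantics forces mutual subsumption, and conclude by pigeonhole. If anything, your version is slightly more careful in placing $\gnd(A)$ in $\mathcal{P}(\mathcal{P}(\mathcal{M}))$ and in explicitly checking the subsumption direction from Definition~\ref{def:subsumption}.
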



\begin{definition}
Let $A=\{\Gamma_1\parallel s_1,...,\Gamma_n\parallel s_n\}$ be a class. The set $vars(A)$ is defined as
$\bigcup_{1\leq i\leq n} (\bigcup_{t\in\Gamma_i} vars(t)) \cup vars(s_i)$. 
\end{definition}

A \emph{state} of \CCVAR~is a finite set of congruence classes.
For a state $\Pi = \{A_1,...,A_n\}$ we define $\gnd(\Pi) = \gnd(A_1) \cup...\cup \gnd(A_n)$.
For any state $\Pi$ and classes $\{A,B\}\subseteq\Pi$ with $A\not=B$, we assume that $vars(A)\cap vars(B)=\emptyset$.
This can always be achieved by renaming the variables of the classes.
Now given a set of equations $E$, where $\gnd_{\mathcal{M}}(s\approx t)\not=\emptyset$ 
for all $s\approx t\in E$ the initial state of \CCVAR~is 
\begin{equation*}
  \begin{split}
\Pi = &\{ \{s\in\mathcal{M}\land t\in\mathcal{M}\parallel s,s\in\mathcal{M}\land t\in\mathcal{M}\parallel t\}~|~s\approx t\in E\}~\cup~ \\
&\{\{f_i(x_{1_i},\ldots,x_{k_i})\in\mathcal{M}\parallel f_i(x_{1_i},\ldots,x_{k_i})\}~|~f_i\in\opers\}
\end{split}
\end{equation*}

In particular, the linear single term classes $f_i(x_{1_i},\ldots,x_{k_i})$ are needed for technical reasons. They enable the Deduction rule to build
terms that are not contained in $E$ as a subterm but are contained in $\mathcal{M}$.

\begin{definition}[$\Pi$-\emph{closed}]
  We call a set $\mathcal{M}$ of ground terms $\Pi$-\emph{closed} for a ground term $t$ and a set of congruence classes $\Pi$, if 
  for all (sub)terms $s$ occurring in $\Pi$ and all grounding substitutions $\sigma$ for $s$  we have if $s\sigma\in \mathcal{M}$ then
  $s\delta\in \mathcal{M}$ for all $\delta$ where if $x\sigma \neq x\delta$ for some variable $x$, then $x\delta = t$.
\end{definition}

The ground term set $\mathcal{M}$ from the introduction is  $\Pi$-\emph{closed}  with respect to
$\{g(x)\parallel g(x)\}, \{h(x)\parallel h(x)\}, \{f(x)\parallel f(x)\},  \{g(x), h(x)\parallel g(x), h(x)\}, \{h(h(y))\allowbreak, f(h(y))\parallel\allowbreak h(h(y)),\allowbreak f(h(y))\}$
for any of the constants. We need this property for our soundness proof, in particular Lemma~\ref{lem:groundsound}. The property guarantees that the
rules do not generate classes that have no meaning in $\mathcal{M}$. From now on we assume that $\mathcal{M}$ is always $\Pi$-\emph{closed}
for some ground term $t\in M$.
We present our algorithm in the form of two abstract rewrite rules:

\vspace{4pt}

\shortrules{Merge}
{$\Pi\cup \{A,B\}$}
{$\Pi\cup \{A,B,(A'\cup B')\mu\}$}
{provided 
 $\text{norm}(A)=\{\Gamma_1\parallel s_1,...,\Gamma_n\parallel s_n\}$,
 $\text{norm}(B)=\{\Delta_1\parallel t_1,...,\Delta_n\parallel t_n\}$,
 there
 exist $(\Gamma\parallel s)\in A,(\Delta\parallel t)\in B$ and $\mu$ such that $\mu = mgu(s,t)$,
 $A'=\{\Gamma_1\land \Gamma\land\Delta\parallel s_1,..., \Gamma_n\land \Gamma\land\Delta\parallel s_n\}$,
 $B'=\{\Delta_1\land \Gamma\land\Delta\parallel t_1,..., \Delta_n\land \Gamma\land\Delta\parallel t_n\}$,
 there exists no $A''\in \Pi\cup \{A,B\}$ such that $(A'\cup B')\mu$ is subsumed by $A''$.
 }{\CCVAR}{6}

 \vspace{4pt}

The rule \emph{Merge} takes as input two classes where a term in the first class is unifiable with a term 
in the second class. The result is the union of their normal classes with the unifier applied. For termination
it is crucial to check if there exists a class that subsumes the newly generated class. Note that the 
\emph{Merge} rule can be seen as a generalization of the \emph{Merge} rule in $\Rightarrow_{\text{CC}}$.
We do not explicitly require the constraint of the newly generated class to be satisfiable, if it is not, the class
is empty and subsumed by any other class.

\vspace{4pt}



 \shortrules{Deduction}
 {$\Pi\cup \{A,B\}$}
 {$\Pi\cup \{A,B,\{\Gamma'\parallel f(s'_1,...,s'_n),\Gamma'\parallel f(t'_1,...,t'_n)\}\mu\}$}
 {provided
  $\Gamma\parallel f(s_1,...,s_n)\in A$, $\Delta\parallel f(t_1,...,t_n)\in B$, 
  and for each $0< i\leq n$, 
  there exists a $D_i\in \Pi$ such that $\Gamma_i\parallel s'_i\in \text{norm}(D_i),\Delta_i\parallel t'_i \in \text{norm}(D_i)$,
  $\mu$ is a simultaneous mgu of $f(s'_1,...,s'_n)=f(s_1,...,s_n)$ and $f(t'_1,...,t'_n) = f(t_1,...,t_n)$,
  $\Gamma' = \Gamma\land \Delta\land \Gamma_1\land...\land \Gamma_n \land \Delta_1\land...\land \Delta_n$,
  there exists no $A'\in \Pi\cup \{A,B\}$ such that $\{\Gamma'\parallel f(s'_1,...,s'_n),\Gamma'\parallel f(t'_1,...,t'_n)\}\mu$ is subsumed by $A'$.
  }{\CCVAR}{11}

\smallskip

 The rule \emph{Deduction} creates a new class if there exist terms with the same top symbol in two different (copies of)
 classes such that their arguments are unifiable with terms that are in the same class. Again \emph{Deduction}
 is a generalization of the \emph{Deduction} rule from $\Rightarrow_{\text{CC}}$. Note, that in \emph{Merge} and \emph{Deduction} $A$ and $B$ can be identical.
 In this case we assume the consideration of a renamed copy. Furthermore, in both rules we inherit all parent constraints
 for the new class. Using this invariant, we could
 also represent each class by a single constraint that is not dedicated to a term. We do not do so in order to get a nicer representation,
 the way constraints are composed depending on the term they belong to.
 Our way of constraint composition can  also result in constraints containing
 variables that do not occur in any term of the class anymore. We'll take care of these constraints in Section~\ref{sec:implementation}.


 

 Note that a large number of redundant classes can be created. 
 While these redundant classes affect neither soundness nor completeness getting rid
 of redundant classes is essential for an efficient implementation. To this end we introduce
 a Subsumption rule that has precedence 
 over all other rules.

\smallskip

 \shortrules{Subsumption}
 {$\Pi\cup \{A,B\}$}
 {$\Pi\cup \{B\}$}
 {provided $B$ subsumes $A$.}{\CCVAR}{14}

We will now prove termination, soundness and completeness. We start with soundness.

\begin{restatable}[Normal Classes]{lemma}{normaltoground}
  \label{lem:normaltoground}
  Let $A$ be a class. There exists a $A'\in gnd(A)$ s.t.
  for all $s, t$: $\{s,t\}\subseteq A'$ iff there exists a substitution $\sigma'$ such that $\{s,t\}\subseteq \text{norm}(A)\sigma'$.
\end{restatable}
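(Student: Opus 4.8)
The plan is to exhibit one concrete ground class $A'\in\gnd(A)$ and then check the biconditional for every pair $s,t$ at once, rather than choosing $A'$ per pair. Write $X$ for the separating and $Y$ for the free variables of $A$, and let $\sigma$ be the fresh renaming of $Y$ used to build $\text{norm}(A)$. First I fix one ground substitution $\rho$ with $\dom(\rho)=X$ under which the constraints of $A$ are satisfiable, and take $A'$ to be the member of $\gnd(A)$ that Definition~\ref{def:classgrounding} associates with $\rho$, namely $A'=\{\,u\rho\tau \mid (\Gamma\parallel u)\in A,\ \tau\text{ grounds }Y,\ \Gamma\rho\tau\text{ holds}\,\}$. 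Existence of $A'$ is immediate; the substance of the lemma is that, for this single $A'$, the pairs contained in it coincide exactly with the pairs obtainable from one grounding of $\text{norm}(A)$.

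For the forward direction, suppose $\{s,t\}\subseteq A'$, so $s=u_s\rho\tau_s$ and $t=u_t\rho\tau_t$ for constrained terms $(\Gamma_s\parallel u_s),(\Gamma_t\parallel u_t)\in A$ and free groundings $\tau_s,\tau_t$. The crucial point is that $\tau_s$ and $\tau_t$ may disagree on a shared free variable, and indeed $u_s$ and $u_t$ may be the \emph{same} term taken at two instances; this is exactly what $\text{norm}$ is designed for. I realize $s$ through the original term $u_s$ and $t$ through the fresh copy $u_t\sigma$, and define $\sigma'$ by $\rho$ on $X$, by $\tau_s$ on the free variables of $u_s$, and by $\tau_t\circ\sigma^{-1}$ on the fresh variables of $u_t\sigma$; since the copy shares only the separating variables $X$ with the original and those receive $\rho$ on both, $\sigma'$ is consistent and yields $u_s\sigma'=s$, $u_t\sigma\sigma'=t$. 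The delicate bookkeeping here is that every norm-term carries the \emph{doubled} constraint $\Gamma\wedge\Gamma\sigma$, so I must extend $\sigma'$ to all remaining variables and verify both conjuncts evaluate to true; this is where the $\Pi$-closedness of $\mathcal{M}$ is used, since it guarantees that the membership constraints survive reinstantiating a variable and hence that splitting $s$ and $t$ across original and copy keeps the doubled constraints satisfiable.

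For the converse, $\{s,t\}\subseteq\text{norm}(A)\sigma'$ means $s$ and $t$ are $\sigma'$-images of norm-terms, each of which is either an original term $u$ of $A$ or a copy $u\sigma$; mapping copies back to $u$ and reading off the induced free grounding shows $s,t$ are valid ground instances of terms of $A$. The main obstacle I expect is precisely this direction, because a single $A'$ pins down one grounding $\rho$ of the separating variables, whereas $\sigma'$ is a priori free on $X$. Since the separating variables of $A$ survive unchanged into $\text{norm}(A)$ (the renaming $\sigma$ touches only free variables), both $s$ and $t$ carry the common instantiation $\sigma'|_X$, and the whole argument rests on showing that satisfiability of the inherited constraints together with the choice of $A'$ forces $\sigma'|_X$ to agree with $\rho$ on the variables actually occurring in $u_s,u_t$, so that $s,t\in A'$. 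Establishing this pinning-down — and thereby excluding that a different separating grounding contributes a pair outside the fixed $A'$ — is the heart of the proof; the freshness and disjointness bookkeeping for the copies, by contrast, is routine.
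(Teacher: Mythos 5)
Your proof cannot be completed as structured, because the step you yourself single out as ``the heart of the proof'' --- that the choice of $A'$ together with satisfiability of the inherited constraints forces $\sigma'|_X$ to agree with your fixed separating grounding $\rho$ --- is false, and your write-up only asserts that it must be established without giving any argument for it. Concretely, take $A=\{g(x),h(x)\parallel g(x),h(x)\}$ with $\mathcal{M}=\{g(a),h(a),g(b),h(b)\}$, so that $\gnd(A)=\{\{g(a),h(a)\},\{g(b),h(b)\}\}$. Whichever member you fix as $A'$, say the one induced by $\rho=\{x\mapsto a\}$, the substitution $\sigma'=\{x\mapsto b\}$ satisfies all constraints and yields $\{g(b),h(b)\}\subseteq \text{norm}(A)\sigma'$ while $\{g(b),h(b)\}\not\subseteq A'$. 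So under your strict reading --- one globally fixed $A'$, biconditional for all pairs --- the statement is simply not provable, and no constraint bookkeeping (in particular no appeal to $\Pi$-closedness) can pin $\sigma'|_X$ down. The lemma has to be read, as the paper's own proof implicitly does, with the ground class chosen per pair: for all $s,t$, $\{s,t\}$ is contained in \emph{some} $A'\in\gnd(A)$ iff there exists $\sigma'$ with $\{s,t\}\subseteq \text{norm}(A)\sigma'$.

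With that reading, your forward direction is essentially the paper's and is fine: write $s=s'\sigma\delta$ and $t=t'\sigma\delta'$ with a common separating grounding $\sigma$ and possibly different free groundings, realize $t$ through the renamed copy $t'\tau$, and paste $\sigma'=\sigma\delta\delta''$ where $\delta''$ re-expresses $\delta'$ over the fresh variables; this is consistent exactly because the copy shares only the separating variables with the original. One repair even there: the paper's proof of this lemma makes no use of $\Pi$-closedness, and you should not need it either --- each conjunct of the doubled constraint $\Gamma\land\Gamma\tau$ is instantiated by $\sigma'$ exactly as the corresponding original constraint is instantiated in Definition~\ref{def:classgrounding}, so satisfiability is inherited directly ($\Pi$-closedness enters only in Lemma~\ref{lem:groundsound}). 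What your converse should do instead of the impossible pinning-down argument is: decompose $\sigma'$ as $\sigma\delta$ with $\sigma$ on the separating variables (which, as you correctly note, survive $\text{norm}$ unchanged) and $\delta$ on $Y\cup Y\tau$, map renamed copies back to $A$ via the inverse renaming, and conclude that $s$ and $t$ are instances of terms of $A$ under the common separating grounding $\sigma$ --- hence lie together in the ground class of $\gnd(A)$ induced by $\sigma$, a class that depends on $\sigma'$ and cannot be fixed in advance.
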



\begin{restatable}[Ground Testing]{lemma}{groundsound}
  \label{lem:groundsound}
Let $\Pi$ be a set of classes. Assume that $\mathcal{M}$ 
is $\Pi$-closed for the term $\alpha$. Let $A\in\Pi$ be a class where all constraints are the same, 
$X$ separating variables, $Y$ free variables of $A$
and $gnd_{\mathcal{M}}(E)\models s\sigma\approx t\sigma$ for all $\{s\sigma,t\sigma\}\subseteq A\sigma$ and grounding $\sigma$.
Then for all $\Gamma\parallel s\in A$ we have $gnd_{\mathcal{M}}(E)\models s\sigma\delta\approx s\sigma\delta'$ for all 
$\sigma:X\rightarrow \mathcal{T}(\Omega, \emptyset)$, $\delta:Y\rightarrow \mathcal{T}(\Omega, \emptyset)$ and 
$\delta' = \{y\mapsto \alpha~|~ y\in Y\}$, where $\alpha\in \mathcal{M}$, $\Gamma\sigma\delta$ satisfiable.
\end{restatable}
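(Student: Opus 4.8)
The plan is to prove the statement by changing the free variables one at a time, reducing the general case to the case where $\delta$ and $\delta'$ differ in the image of a single free variable, and to discharge each such step with the hypothesis together with the $\Pi$-closedness assumption.

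First I would set up a telescoping chain. Enumerate the free variables $Y=\{y_1,\dots,y_m\}$ and define substitutions $\delta=\delta_0,\delta_1,\dots,\delta_m=\delta'$, where $\delta_k$ agrees with $\delta_{k-1}$ except that $y_k$ is mapped to $\alpha$. It then suffices to show, for each $k$, that $gnd_{\mathcal{M}}(E)\models s\sigma\delta_{k-1}\approx s\sigma\delta_k$: the desired $gnd_{\mathcal{M}}(E)\models s\sigma\delta\approx s\sigma\delta'$ follows by transitivity of the entailed equality relation (which is immediate, since any model of $gnd_{\mathcal{M}}(E)$ satisfying two adjacent equations satisfies their composite). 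The case $Y=\emptyset$, so $\delta=\delta'$, is just reflexivity.

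For the single-step claim, fix $k$ and write $y=y_k$. If $y\notin\vars(s)$, then $s\sigma\delta_{k-1}=s\sigma\delta_k$ and we are done. Otherwise, since $y$ is a \emph{free} variable of $A$, by definition of free variables there is a constrained term $(\Gamma\parallel t)\in A$ with $y\notin\vars(t)$. The heart of the argument is the chain
\[
s\sigma\delta_{k-1}\;\approx\;t\sigma\delta_{k-1}\;=\;t\sigma\delta_k\;\approx\;s\sigma\delta_k,
\]
all modulo $gnd_{\mathcal{M}}(E)$: the middle identity holds because $\delta_{k-1}$ and $\delta_k$ differ only on $y\notin\vars(t)$, while the two outer congruences are instances of the hypothesis applied to the groundings $\sigma\delta_{k-1}$ and $\sigma\delta_k$ of $A$, using that $\{s\sigma\delta_{k-1},t\sigma\delta_{k-1}\}\subseteq A(\sigma\delta_{k-1})$ and likewise for $\delta_k$.

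To apply the hypothesis at the grounding $\sigma\delta_k$ I must check that $\sigma\delta_k$ is a grounding that keeps the relevant terms in the class, i.e. that the (shared) constraint $\Gamma\sigma\delta_k$ is true, so that $s\sigma\delta_k$ and $t\sigma\delta_k$ indeed lie in $A(\sigma\delta_k)$. This is exactly where $\Pi$-closedness for $\alpha$ enters: for every constraint term $w\in\Gamma$ (and, more generally, every subterm occurring in $A$), we have $w\sigma\delta_{k-1}\in\mathcal{M}$ from $\Gamma\sigma\delta$ being satisfiable together with the preceding steps, and since $\sigma\delta_k$ differs from $\sigma\delta_{k-1}$ only by sending $y$ to $\alpha$, $\Pi$-closedness yields $w\sigma\delta_k\in\mathcal{M}$. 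Hence $\Gamma\sigma\delta_k$ remains true and the hypothesis is applicable. I expect this bookkeeping to be the main obstacle: the congruence-theoretic content is just the short transitivity chain through the bridge term $t$, but one must verify at every step of the telescope that the modified grounding still witnesses membership of all involved terms in $\mathcal{M}$ and preserves satisfiability of the constraint — which is precisely the role of the $\Pi$-closedness assumption.
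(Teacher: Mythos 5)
Your proposal is correct and follows essentially the same route as the paper's proof: change one free variable at a time to $\alpha$, bridge each step through a class term $t$ not containing that variable via $s\sigma\delta_{k-1}\approx t\sigma\delta_{k-1}=t\sigma\delta_k\approx s\sigma\delta_k$, and invoke $\Pi$-closedness to keep the constraints satisfiable. Your version is somewhat more explicit about the constraint bookkeeping, but the underlying argument is identical.
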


\begin{restatable}[]{corollary}{nonnormtonorm}
  \label{lem:nonnormtonorm}
  Let $A$ be a class where all constraints are the same, 
and $gnd_{\mathcal{M}}(E)\models s\sigma\approx t\sigma$ for all $\{s\sigma,t\sigma\}\subseteq A\sigma$ and grounding $\sigma$.
Then $gnd_{\mathcal{M}}(E)\models s\sigma\approx t\sigma$ for all $\{s\sigma,t\sigma\}\subseteq \text{norm}(A)\sigma$ and grounding $\sigma$.
\end{restatable}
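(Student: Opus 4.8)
The plan is to reduce the statement about $\text{norm}(A)$ back to the hypothesis on $A$, using Lemma~\ref{lem:groundsound} to normalize the groundings of the free variables onto a common anchor. Let $\alpha\in\mathcal{M}$ be the ground term for which $\mathcal{M}$ is $\Pi$-closed (available by the standing assumption), let $X,Y$ be the separating and free variables of $A$, and set $\delta' = \{y\mapsto \alpha \mid y\in Y\}$. Fix a grounding $\sigma'$ of $\text{norm}(A)$ and a pair $\{u,v\}\subseteq \text{norm}(A)\sigma'$; the goal is to show $gnd_{\mathcal{M}}(E)\models u\approx v$.

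First I would make the structure of $\text{norm}(A)$ explicit. By definition every constrained term of $\text{norm}(A)$ is either an original $(\Gamma\land\Gamma\rho\parallel s)$ with $(\Gamma\parallel s)\in A$, or a renamed copy $(\Gamma\land\Gamma\rho\parallel s\rho)$, where the renaming $\rho$ acts only on the free variables $Y$. Consequently $u$ equals $s\sigma'$ or $s\rho\sigma'$ for some $(\Gamma\parallel s)\in A$, and likewise $v$ for some $t$. In either case the separating variables are instantiated by $\sigma_X := \sigma'|_X$, because $\rho$ fixes $X$, while the free variables are instantiated by $\sigma'|_Y$ or by $\rho\sigma'$. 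Applying Lemma~\ref{lem:groundsound} to $s$ — once with free part $\sigma'|_Y$ and once with $\rho\sigma'$ for the copy — gives $gnd_{\mathcal{M}}(E)\models u\approx s\sigma_X\delta'$, and the same lemma gives $gnd_{\mathcal{M}}(E)\models v\approx t\sigma_X\delta'$. The side condition that the relevant constraints are satisfiable, and the use of $\Pi$-closedness and $\alpha\in\mathcal{M}$ to land inside $\mathcal{M}$, are exactly what Lemma~\ref{lem:groundsound} supplies.

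It then remains to compare the two anchors $s\sigma_X\delta'$ and $t\sigma_X\delta'$, which I would do by routing through the non-renamed instances. The restriction of $\sigma'$ to $\vars(A)$ is grounding for $A$ and satisfies $\Gamma$ (since $\Gamma\land\Gamma\rho$ holds, so in particular $\Gamma$ does), hence $\{s\sigma',t\sigma'\}\subseteq A\sigma'$ and the hypothesis yields $gnd_{\mathcal{M}}(E)\models s\sigma'\approx t\sigma'$. Combining this with the instances $gnd_{\mathcal{M}}(E)\models s\sigma'\approx s\sigma_X\delta'$ and $gnd_{\mathcal{M}}(E)\models t\sigma'\approx t\sigma_X\delta'$ of Lemma~\ref{lem:groundsound} gives $gnd_{\mathcal{M}}(E)\models s\sigma_X\delta'\approx t\sigma_X\delta'$. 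Chaining all equalities by transitivity and symmetry of equational entailment produces $gnd_{\mathcal{M}}(E)\models u\approx v$, as required. Alternatively, one may first invoke Lemma~\ref{lem:normaltoground} to replace the pair $\{u,v\}\subseteq\text{norm}(A)\sigma'$ by a pair inside a single ground class $A'\in gnd(A)$ and then run the same anchoring argument.

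I expect the main obstacle to be the bookkeeping around the renaming $\rho$: one must verify that a ground instance of a renamed copy genuinely corresponds to grounding only the free variables of the original term differently, leaving the separating part equal to $\sigma_X$, so that Lemma~\ref{lem:groundsound} applies to both $u$ and the original $s\sigma'$ with the \emph{same} anchor $\delta'$. The constraint side conditions are comparatively routine, being absorbed into the invocation of Lemma~\ref{lem:groundsound}.
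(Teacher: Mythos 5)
Your argument is correct and matches the paper's intent: the paper's own proof is a one-line appeal to Lemma~\ref{lem:normaltoground} and Lemma~\ref{lem:groundsound}, and your anchoring argument via $\delta'=\{y\mapsto\alpha\mid y\in Y\}$ is exactly the way those two lemmas are meant to combine. You simply spell out the bookkeeping (the renaming $\rho$ fixing the separating variables, the constraint satisfiability being inherited from $\Gamma\land\Gamma\rho$) that the paper leaves implicit.
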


\begin{restatable}[$\Rightarrow_{\CCVAR}$ is sound]{lemma}{cccsound}
 \label{lem:cccsound}
  For any run of $\Rightarrow_{\CCVAR}$, any state $\Pi'$ in this run and
  for all terms $s,t$ and grounding substitution $\sigma$ such that there exists a class 
  $A\in \Pi'$ such that $\{\Gamma\parallel s,\Delta\parallel t\}\subseteq \text{norm}(A)$, $\Gamma\sigma$ and $\Delta\sigma$ satisfiable,
  it holds that 
  $\gnd_{\mathcal{M}}(E)\models s\sigma\approx t\sigma$ .
\end{restatable}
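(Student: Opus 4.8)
The plan is to prove the statement by induction on the number of $\Rightarrow_{\CCVAR}$ steps in the run, establishing the slightly repackaged invariant that for every class $C$ in the current state and every grounding $\sigma$ with the relevant constraints satisfied, $\gnd_{\mathcal{M}}(E)\models s\sigma\approx t\sigma$ holds for all $\{s\sigma,t\sigma\}\subseteq C\sigma$. Once this invariant is proven for the plain classes $C$, Corollary~\ref{lem:nonnormtonorm} lifts it to $\text{norm}(C)$, which is exactly the form demanded by the statement. This lifting step is where the technical distinction between separating and free variables is hidden: Corollary~\ref{lem:nonnormtonorm} rests on Lemma~\ref{lem:groundsound} and the $\Pi$-closedness of $\mathcal{M}$, which together guarantee that the renamed free-variable copies introduced by $\text{norm}$ have the same ground meaning. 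Hence I can work throughout with ordinary classes and invoke the corollary only as a black box at the interface to $\text{norm}$.

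For the base case I inspect the initial state. The single-term classes $\{f_i(\bar x)\in\mathcal{M}\parallel f_i(\bar x)\}$ contribute only pairs $s\sigma\approx s\sigma$, which are instances of Reflexivity and hence trivially entailed. The equation classes $\{s\in\mathcal{M}\land t\in\mathcal{M}\parallel s,\; s\in\mathcal{M}\land t\in\mathcal{M}\parallel t\}$ for $s\approx t\in E$ only yield the pair $\{s\sigma,t\sigma\}$, and when the shared constraint is satisfied we have $s\sigma,t\sigma\in\mathcal{M}$, so $s\sigma\approx t\sigma\in\gnd_{\mathcal{M}}(s\approx t)\subseteq\gnd_{\mathcal{M}}(E)$ and the entailment is immediate.

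For the inductive step I distinguish the three rules. \emph{Subsumption} only deletes a class, so every surviving class satisfies the invariant by the induction hypothesis. For \emph{Merge}, the new class is $(A'\cup B')\mu$ with $\mu=\mgu(s,t)$, $s\in A$, $t\in B$; I take two terms $u,v$ of it together with a grounding $\tau$ satisfying their (inherited) constraints and treat three cases. If both $u,v$ descend from $A'$, then $\mu\tau$ is a grounding of $\text{norm}(A)$ whose constraints it satisfies, since those constraints are inherited into $A'$; the induction hypothesis for $A$ together with Corollary~\ref{lem:nonnormtonorm} gives $\gnd_{\mathcal{M}}(E)\models u\tau\approx v\tau$. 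The case of both descending from $B'$ is symmetric. The essential case is $u$ from $A'$ and $v$ from $B'$: here I use the unified term as a bridge, noting $s\mu=t\mu$ so that $s\mu\tau=t\mu\tau$ syntactically; the induction hypothesis applied inside $\text{norm}(A)$ gives $u\tau\approx s\mu\tau$ and inside $\text{norm}(B)$ gives $t\mu\tau\approx v\tau$, and transitivity of $\models$ chains these through the shared ground term. For \emph{Deduction} the only nontrivial pair is $(f(s'_1,\ldots,s'_n)\mu,\; f(t'_1,\ldots,t'_n)\mu)$; for each argument position the induction hypothesis on $D_i$ (again via Corollary~\ref{lem:nonnormtonorm}) yields $\gnd_{\mathcal{M}}(E)\models s'_i\mu\tau\approx t'_i\mu\tau$, and one application of the Congruence rule of $\Rightarrow_{\text{EQ}}$ lifts these $n$ component equalities to the required equality of the two $f$-terms.

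The main obstacle I expect is the bookkeeping of constraints under the composition $\mu\tau$ in the \emph{Merge} bridge case: to invoke the induction hypothesis on the pair $(u,s)$ inside $\text{norm}(A)$ I need both the constraint of $u$ and the constraint of the bridge term $s$ to be satisfied by $\mu\tau$, and the latter is available only because the merge condition inherits the constraint $\Gamma$ of $s$ (and $\Delta$ of $t$) into every term of $A'$ and $B'$. Verifying that this inheritance makes $\mu\tau$ a legitimate, constraint-satisfying grounding of both parent classes simultaneously — which in turn relies on the assumed variable-disjointness $\vars(A)\cap\vars(B)=\emptyset$ so that $\mu\tau$ acts coherently on each parent — is the delicate part; the equational reasoning itself reduces to Reflexivity, Congruence, and transitivity and is routine once the groundings are aligned.
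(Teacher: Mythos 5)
Your proposal is correct and follows essentially the same route as the paper's proof: an induction over the run with the invariant stated for plain classes and lifted to $\text{norm}(A)$ via Corollary~\ref{lem:nonnormtonorm}, the same base-case argument, the same bridge through $s\mu = t\mu$ with transitivity for \emph{Merge} (justified, as in the paper, by the inheritance of $\Gamma$ and $\Delta$ into all constraints of $A'\cup B'$), and the same congruence step for \emph{Deduction}. Your explicit three-way case split for \emph{Merge} and the remark on variable-disjointness are slightly more detailed than the paper's write-up but add nothing structurally different.
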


\begin{restatable}[$\Rightarrow_{\CCVAR}$ is Terminating]{lemma}{cccterm}
 \label{lem:cccterm}
  For any run of $\Rightarrow_{\CCVAR}$ we reach a state, where no rule of $\Rightarrow_{\CCVAR}$ 
  is applicable anymore.
\end{restatable}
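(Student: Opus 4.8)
The plan is to show that any run performs only finitely many rule applications. First I would record two structural facts about the rules. Rules \emph{Merge} and \emph{Deduction} each add exactly one class while \emph{Subsumption} removes exactly one, so the number of classes in a state changes by $+1$ or $-1$ per step; since the count never drops below $0$, if \emph{Merge} and \emph{Deduction} together fire only finitely often, then \emph{Subsumption} can fire at most (initial count $+$ number of additions) many times, and the whole run is finite. Hence it suffices to bound the number of \emph{Merge} and \emph{Deduction} steps. Second, I would show by induction over the run that every class produced is $\mathcal{M}$-constrained: the initial classes are $\mathcal{M}$-constrained by construction, in \emph{Merge} the new term $s_i\mu$ carries the conjunct $(s_i{\in}\mathcal{M})\mu = s_i\mu{\in}\mathcal{M}$ inherited from $\Gamma_i$, and in \emph{Deduction} the simultaneous unifier identifies $f(s'_1,\dots,s'_n)\mu$ with $f(s_1,\dots,s_n)\mu$, whose membership conjunct is inherited from $\Gamma$. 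This is what lets me invoke Lemma~\ref{lem:finnumclscomb}.

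Now suppose, for contradiction, that a run is infinite. By the counting argument it performs infinitely many \emph{Merge}/\emph{Deduction} steps; list the classes they add, in order, as $C_0, C_1, \dots$. I want to argue that $C_i$ is never subsumed by any earlier $C_j$ with $j<i$, which together with $\mathcal{M}$-constrainedness contradicts Lemma~\ref{lem:finnumclscomb}. The side condition of \emph{Merge} and \emph{Deduction} already guarantees that $C_i$ is not subsumed by any class present in the state at the moment $C_i$ is created; the only gap is that some earlier $C_j$ may have been deleted by a \emph{Subsumption} step before $C_i$ appears.

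To close this gap I would first prove that subsumption is reflexive and transitive (both immediate from Definition~\ref{def:subsumption} by chaining $A'\subseteq B'\subseteq C'$ through $\gnd(A),\gnd(B),\gnd(C)$). Using transitivity I would then establish the invariant that, at every point of the run, every class that has \emph{ever} occurred in the state is subsumed by some class currently in the state. This holds initially by reflexivity, is preserved by \emph{Merge}/\emph{Deduction} since they only add classes, and is preserved by \emph{Subsumption}: when $A$ is deleted it is subsumed by a class $B$ that stays, so any class formerly subsumed by $A$ becomes subsumed by $B$ via transitivity. Granting the invariant, fix $i$ and the state $\Pi_i$ just before $C_i$ is added: each earlier $C_j$, having occurred in the state, is subsumed by some $D\in\Pi_i$; if $C_j$ subsumed $C_i$ then by transitivity $D$ would subsume $C_i$, contradicting the side condition that forbids any member of $\Pi_i$ from subsuming $C_i$. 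Hence no earlier $C_j$ subsumes $C_i$, the chain $C_0,C_1,\dots$ violates Lemma~\ref{lem:finnumclscomb}, and the run must be finite.

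The main obstacle is precisely this bookkeeping around deleted classes: the naive appeal to the side condition only controls subsumption by classes still present, so the argument hinges on the transitivity-based invariant that a \emph{Subsumption} step never loses a subsumer. Everything else — the $\pm 1$ counting, the preservation of $\mathcal{M}$-constrainedness, and the reflexivity and transitivity of subsumption — is routine.
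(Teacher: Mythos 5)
Your proposal is correct and follows essentially the same route as the paper: all generated classes are $\mathcal{M}$-constrained, the subsumption side conditions of \emph{Merge} and \emph{Deduction} force the sequence of newly created classes to form a chain violating Lemma~\ref{lem:finnumclscomb} if it were infinite, and \emph{Subsumption} steps are bounded by the number of additions. Your explicit transitivity-based invariant (every class ever present remains subsumed by some current class) is a careful elaboration of the single sentence the paper devotes to the deleted-class issue, but it is the same underlying idea.
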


\begin{restatable}[$\Rightarrow_{\CCVAR}$ is Complete]{lemma}{ccccomp}
 \label{lem:ccccomp}
  Let $\Pi$ be the result of a run of $\Rightarrow_{\CCVAR}$ such that no rule of 
  $\Rightarrow_{\CCVAR}$ is applicable anymore. Then for all 
  $\{s,t\}\subseteq \mathcal{M}$  such that $\gnd_{\mathcal{M}}(E)\models s\approx t$
  there exists a class $A\in\Pi$, $\{\Gamma\parallel s',\Delta\parallel t'\}\subseteq \text{norm}(A)$ and grounding substitution $\sigma$ 
  such that $s'\sigma = s$ and $t'\sigma = t$ and $\Gamma\sigma, \Delta\sigma$ satisfiable.
\end{restatable}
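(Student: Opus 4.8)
The plan is to route through Birkhoff's Theorem and then exhibit the entire ground congruence closure as being \emph{covered} by $\gnd(\Pi)$. Concretely, for the terminal state $\Pi$ define a relation $\approx_{\Pi}$ on $\mathcal{M}$ by: $u \approx_{\Pi} v$ iff there is a class $A \in \Pi$ and a ground class $A' \in \gnd(A)$ with $\{u,v\} \subseteq A'$. By Lemma~\ref{lem:normaltoground} this is exactly the conclusion to be proved, rephrased: $u \approx_{\Pi} v$ holds iff there are $\Gamma\parallel s',\Delta\parallel t' \in \text{norm}(A)$ and a grounding $\sigma$ with $s'\sigma = u$, $t'\sigma = v$; satisfiability of $\Gamma\sigma,\Delta\sigma$ then follows from $u,v\in\mathcal{M}$ together with Lemma~\ref{lem:nongndtognd} and the invariant that every reachable class is $\mathcal{M}$-constrained. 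Hence it suffices to show that $\approx_{\Pi}$ contains every ground equality entailed by $\gnd_{\mathcal{M}}(E)$. By the Birkhoff characterisation recalled in Section~\ref{sec:prelim}, $\gnd_{\mathcal{M}}(E)\models s\approx t$ iff $s\approx t$ is derivable from $\gnd_{\mathcal{M}}(E)$ by $\Rightarrow_{\text{EQ}}$, and since $\gnd_{\mathcal{M}}(E)$ is ground this derivation uses only Reflexivity, Symmetry, Transitivity and Congruence. Thus I would prove that $\approx_{\Pi}$ is reflexive, symmetric, transitive and closed under congruence on $\mathcal{M}$ and contains $\gnd_{\mathcal{M}}(E)$; being such a congruence containing the axioms, it then contains the congruence closure relativised to $\mathcal{M}$.

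First I would establish the base case and the easy closure properties. The axioms $\gnd_{\mathcal{M}}(E)$ are covered by the initial state, since for every $s\approx t\in E$ and grounding $\tau$ with $s\tau,t\tau\in\mathcal{M}$ the initial class for $s\approx t$ has $\{s\tau,t\tau\}$ in one of its ground classes. Reflexivity on $\mathcal{M}$ is covered by the linear single-term classes $\{f_i(x_{1_i},\ldots,x_{k_i})\parallel f_i(x_{1_i},\ldots,x_{k_i})\}$, which by $\Pi$-closedness have every $\mathcal{M}$-term of the corresponding top symbol as a ground instance; symmetry is immediate because ground classes are sets. The point needing care is that these properties survive to the terminal state: I would prove as an invariant that Subsumption, Merge and Deduction never decrease coverage, where a class deleted by Subsumption is, by Definition~\ref{def:subsumption}, covered pointwise by the subsuming class. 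Hence coverage established initially is preserved throughout the run and in particular holds in $\Pi$.

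The heart of the argument is transitivity and congruence, and here I would use precisely that $\Pi$ is terminal. For transitivity, suppose $u\approx_{\Pi} v$ is witnessed by $A$ and $v\approx_{\Pi} w$ by $B$. If both are witnessed by the same ground class we are done; otherwise Merge applies to $A$ and $B$ (taking a renamed copy when $A=B$): by Lemma~\ref{lem:normaltoground} there are terms of $\text{norm}(A)$ and $\text{norm}(B)$ having $v$ as a common ground instance, hence, up to the renaming built into norm, unifiable parent terms $s\in A$, $t\in B$. Since Merge is not applicable in the terminal state, its last side condition must fail, so the would-be merged class $(A'\cup B')\mu$ is subsumed by some $C\in\Pi$; an appropriate grounding of that merged class sending the unified position to $v$ puts $u,v,w$ into one ground class, and subsumption carries this into a ground class of $C$, giving $u\approx_{\Pi} w$. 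For congruence, suppose $u_i\approx_{\Pi} v_i$ for all $i$ and $f(u_1,\ldots,u_n),f(v_1,\ldots,v_n)\in\mathcal{M}$. The hypotheses provide, for each $i$, a class $D_i$ whose norm contains terms with ground instances $u_i$ and $v_i$; together with $f(x_1,\ldots,x_n)$ available from the single-term class, the preconditions of Deduction are met, and $\Pi$-closedness guarantees that the resulting constrained terms have satisfiable constraints because $f(u_1,\ldots,u_n),f(v_1,\ldots,v_n)\in\mathcal{M}$. Again termination forces the deduced class to be subsumed by some $C\in\Pi$, and the appropriate grounding places $f(u_1,\ldots,u_n)$ and $f(v_1,\ldots,v_n)$ in a common ground class of $C$.

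I expect the congruence case to be the main obstacle. Two things must be handled carefully: matching the pairwise hypotheses $u_i\approx_{\Pi} v_i$ to the single-class requirement $\Gamma_i\parallel s'_i,\Delta_i\parallel t'_i\in\text{norm}(D_i)$ of the Deduction rule, i.e.\ that $u_i$ and $v_i$ are witnessed \emph{inside the same} class; and ensuring that all intermediate terms the Birkhoff derivation needs to relate actually lie in $\mathcal{M}$, which is exactly what $\Pi$-closedness for the reference term $\alpha$ provides. A secondary subtlety is the bookkeeping between a class and its normal form: every use of ``unifiable parent terms'' and ``common ground instance'' must be justified through Lemma~\ref{lem:normaltoground} and the renaming in $\text{norm}$, and satisfiability of the inherited constraints read off from $\mathcal{M}$-constrainedness and Lemma~\ref{lem:nongndtognd}. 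Termination, Lemma~\ref{lem:cccterm}, is what makes the implication ``rule not applicable $\Rightarrow$ already subsumed'' available and guarantees that a terminal $\Pi$ exists at all.
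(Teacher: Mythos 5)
Your proposal is correct and follows essentially the same route as the paper's proof: reduce to Birkhoff's $\Rightarrow_{\text{EQ}}$ characterisation, cover the axioms and Reflexivity by the initial and single-term classes, and discharge Transitivity via Merge and Congruence via Deduction, using the dichotomy that either the would-be new class is built or it is already subsumed by an existing class. The only difference is presentational — you argue directly that the terminal state induces a congruence on $\mathcal{M}$, whereas the paper runs an induction over the $\Rightarrow_{\text{EQ}}$ derivation — but the case analysis and the key lemmas invoked are the same.
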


\section{Towards Implementation}
\label{sec:implementation}

In the following, we describe refinements and optimizations of the \CCVAR\ calculus towards an implementation. We start with a simplification of 
the Deduction rule.
It suffices to use only the single term classes for $A$ and $B$ in the Deduction rule.

\begin{restatable}[Deduction with Single Term Classes]{lemma}{dedstc}
  Assume a state $\Pi$ such that $Deduction$ is applicable for some $\{A,B\}\subseteq \Pi$ and $\Gamma\parallel f(s_1,...,s_n)\in A, 
  \Delta\parallel f(t_1,...,t_n)\in B$. Let $\mu$ be the resulting simultaneous mgu and 
  $\{\Gamma_i \parallel s'_i,\Delta_i\parallel t'_i\}\subseteq \text{norm}(D_i)$ for all $1\leq i\leq n$ such that 
  $\{\Gamma'\parallel f(s'_1,...,s'_n), f(t'_1,...,t'_n)\}\mu$ is the resulting new class.
  Then Deduction is applicable for two variable disjoint copies of the single term class $\{f(x_1,...,x_n)\parallel f(x_1,...,x_n)\}$.
\end{restatable}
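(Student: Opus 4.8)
The plan is to show that whatever Deduction produces from general classes $A$ and $B$ can be reproduced using two variable-disjoint copies of the generic single term class $\{f(x_1,\ldots,x_n)\parallel f(x_1,\ldots,x_n)\}$ in place of $A$ and $B$. The key observation is that in the original Deduction step, $A$ and $B$ contribute \emph{only} through the selected terms $\Gamma\parallel f(s_1,\ldots,s_n)\in A$ and $\Delta\parallel f(t_1,\ldots,t_n)\in B$ and their constraints; the rest of the terms in $A$ and $B$ play no role in forming the new class, since the conclusion is built entirely from the arguments $s'_i, t'_i$ drawn from the $D_i$ together with the single selected pair. So the first step is to set up the two copies $\{f(x_1,\ldots,x_n)\parallel f(x_1,\ldots,x_n)\}$ and $\{f(y_1,\ldots,y_n)\parallel f(y_1,\ldots,y_n)\}$ with fresh, mutually disjoint variables, and designate $f(x_1,\ldots,x_n)$ as the selected term of the first copy and $f(y_1,\ldots,y_n)$ as the selected term of the second.

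Next I would verify that the side conditions of Deduction are met for these copies. The same witnesses $D_i\in\Pi$ and the same representatives $\Gamma_i\parallel s'_i,\ \Delta_i\parallel t'_i\in\text{norm}(D_i)$ are reused verbatim. The crucial point is the existence of a simultaneous mgu: we need to solve $f(s'_1,\ldots,s'_n)=f(x_1,\ldots,x_n)$ and $f(t'_1,\ldots,t'_n)=f(y_1,\ldots,y_n)$. Because the $x_i$ and $y_j$ are distinct fresh variables occurring nowhere else, this unification problem simply binds each $x_i\mapsto s'_i$ and each $y_i\mapsto t'_i$ (up to composition with whatever unifier the $s'_i, t'_i$ force among themselves), which is strictly more general than the original problem where the arguments had to match the fixed $s_i$ and $t_i$ coming from $A$ and $B$. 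I would argue that the resulting simultaneous mgu $\mu'$ exists precisely because the original $\mu$ existed: the original required unifying $s'_i$ with $s_i$ and $t'_i$ with $t_i$ simultaneously, and replacing the rigid $s_i, t_i$ by fresh variables can only relax the constraints, so solvability is preserved.

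I would then compare the two produced classes. With the single term copies the new class is $\{\Gamma''\parallel f(s'_1,\ldots,s'_n), f(t'_1,\ldots,t'_n)\}\mu'$ where $\Gamma'' = \Gamma_1\land\cdots\land\Gamma_n\land\Delta_1\land\cdots\land\Delta_n$ together with the trivial constraints $f(x_1,\ldots,x_n)\in\mathcal{M}$ and $f(y_1,\ldots,y_n)\in\mathcal{M}$ from the single term classes. The original class carried additionally the parent constraints $\Gamma$ and $\Delta$ from $A$ and $B$. I would need to show that the single-term version subsumes, or is at least equivalent for the purposes of the lemma to, the original: since $\Gamma\parallel f(s_1,\ldots,s_n)$ was $\mathcal{M}$-constrained, $\Gamma$ entails $f(s_1,\ldots,s_n)\in\mathcal{M}$, and under $\mu$ the selected term equals $f(s'_1,\ldots,s'_n)\mu$, so the single-term constraint $f(x_1,\ldots,x_n)\mu'\in\mathcal{M}$ captures the essential membership condition. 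The claim as stated only asserts that Deduction \emph{is applicable} for the single term copies, so it suffices to exhibit a legal instantiation of all side conditions plus the non-subsumption clause; for the latter I would note that if the single-term conclusion were already subsumed then so would the original be (by the subsumption argument just sketched), contradicting applicability of the original step.

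The main obstacle I anticipate is the careful bookkeeping of the constraints and the mgu. One must be precise about the fact that the freshness of $x_1,\ldots,x_n,y_1,\ldots,y_n$ genuinely makes the simultaneous unification problem more general, and that no variable capture occurs between the $D_i$-contributed terms and the single term classes; establishing variable-disjointness as a standing assumption (which the paper already grants for distinct classes in a state) handles this, but it must be invoked explicitly. A secondary subtlety is the non-subsumption side condition: I would argue that passing from the richer parent constraints of $A,B$ to the leaner constraints of the single term classes can only \emph{weaken} the conclusion's constraint, hence \emph{enlarge} its ground semantics, so a subsuming class for the single-term conclusion would also subsume the original, which is the step where one must lean on Definition~\ref{def:subsumption} and Lemma~\ref{lem:nongndtognd}.
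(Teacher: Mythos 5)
Your proposal is correct and follows essentially the same route as the paper: exhibit the simultaneous mgu for the two fresh copies by binding $x_i\mapsto s'_i$ and $y_i\mapsto t'_i$, reuse the same witnesses $D_i$, and read off the resulting class and its constraint. You are in fact slightly more thorough than the paper's proof, which only constructs $\mu'$, $\Gamma''$, and the resulting class, and relegates the "more general, hence non-subsumption carries over" observation to the remark following the lemma rather than arguing it inside the proof.
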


It is easy to see that the class created by the single term class is always more general than the other classes that can be created by 
Deduction. 

A naive implementation of Subsumption, Definition~\ref{def:subsumption}, by ground instantiation results in
a practically intractable procedure. Therefore, it is approximated by the below subsumption by matching rule
that does not need ground instantiation and is practically tractable, see Section~\ref{sec:evaluation}.

\begin{definition}[Subsumption by Matching]\label{def:submatch}
  Let $A, B$ be classes. Let $X$ be the separating variables of $B$ and $Y$ the free variables 
of $B$.
  $B$ subsumes $A$ by matching iff there exists a substitution $\sigma:X\rightarrow 
\mathcal{T}(\Omega, vars(A))$ that maps every variable in $X$, such that for every 
$\Gamma\parallel t\in A$ there is a $\tau:Y\rightarrow 
\mathcal{T}(\Omega, vars(A))$ and $(\Delta\parallel s)\sigma\in B\sigma$ such that $t=s\sigma\tau$
and $\forall\delta. (\Gamma\delta\rightarrow\exists \delta'.\Delta\sigma\tau\delta\delta')$.
\end{definition}

\begin{restatable}{lemma}{subsumptionmatch}
  \label{lem:subsumptionmatch}
  Let $A, B$ be classes. If $B$ subsumes $A$ by matching then $B$ subsumes $A$.
\end{restatable}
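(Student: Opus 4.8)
The plan is to unfold both Definition~\ref{def:subsumption} and Definition~\ref{def:submatch} and to turn the matching witnesses $\sigma$ and $\tau$ into concrete groundings of $B$. Fix an arbitrary $A' \in \gnd(A)$. By Definition~\ref{def:classgrounding}, $A'$ is obtained by first choosing a ground substitution $\rho$ with $\dom(\rho) = X_A$, where $X_A$ are the separating variables of $A$, and then taking the union over all groundings $\theta$ of the remaining free variables; concretely $A' = \{\, t\rho\theta \mid (\Gamma \parallel t) \in A,\ \theta \text{ grounding},\ \Gamma\rho\theta \text{ true}\,\}$. I must produce a single $B' \in \gnd(B)$ with $A' \subseteq B'$, i.e.\ I must commit to one grounding of the separating variables $X_B$ of $B$ that works uniformly for every term contributing to $A'$.

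The construction uses the matching substitution $\sigma : X_B \to \mathcal{T}(\Omega, \vars(A))$ from Definition~\ref{def:submatch}: I set the grounding of the separating variables of $B$ to be $\pi := \{\, x \mapsto x\sigma\rho \mid x \in X_B\,\}$ and, term by term, ground the free variables of $B$ by $\tau\rho\theta$. The step I expect to be the main obstacle is showing that $\pi$ is actually ground and, crucially, independent of $\theta$, so that all terms of $A'$ land in the \emph{same} ground class of $B$. For this I would prove the auxiliary fact that $\vars(x\sigma) \subseteq X_A$ for every $x \in X_B$: since $x$ is separating for $B$ it occurs in every term of $B$, and since every $(\Gamma \parallel t) \in A$ is matched by some $(\Delta \parallel s)$ with $t = s\sigma\tau$ and $\tau$ only touching the free variables $Y_B$ (disjoint from $\vars(A)$), the subterm $x\sigma$ occurs in $t$; hence $\vars(x\sigma) \subseteq \vars(t)$ for all $t$, so $\vars(x\sigma) \subseteq \bigcap_{t}\vars(t) = X_A$. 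Because $\rho$ grounds all of $X_A$, the substitution $\pi$ is ground and does not depend on $\theta$, which is exactly what guarantees a single target class $B'$.

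It then remains to check membership term by term. For $t\rho\theta \in A'$ pick the matching $s$, $\tau$ with $t = s\sigma\tau$; composing substitutions and using $\vars(x\sigma) \subseteq X_A$ gives $t\rho\theta = s\sigma\tau\rho\theta = s\pi\kappa$ with $\kappa := \{\, y \mapsto y\tau\rho\theta \mid y \in Y_B\,\}$, a grounding of the free variables of $B$. To place $s\pi\kappa$ inside $B'$ I must certify its constraint, and here the implication $\forall\delta.\,(\Gamma\delta \to \exists\delta'.\,\Delta\sigma\tau\delta\delta')$ of Definition~\ref{def:submatch} does the work: instantiating $\delta := \rho\theta$, the hypothesis that $\Gamma\rho\theta$ is true (membership of $t\rho\theta$ in $A'$) yields a $\delta'$ making $\Delta\sigma\tau\rho\theta\delta' = \Delta\pi\kappa\delta'$ true, where $\delta'$ only fixes constraint-only variables not occurring in $s$. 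Thus $\Delta\pi$ is satisfiable and $s\pi\kappa = t\rho\theta$ is collected into the ground class $B'$ determined by $\pi$ according to Definition~\ref{def:classgrounding}. Since $\pi$ is the same for every term of $A'$, all of $A'$ lies in this one $B' \in \gnd(B)$, hence $A' \subseteq B'$ and $B$ subsumes $A$. The only routine remaining points are the variable-disjointness of $A$ and $B$ (assumed for states) and the empty-class edge case, which is vacuous.
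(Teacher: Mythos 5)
Your proof is correct, and it reaches the conclusion by a genuinely more direct route than the paper. The paper argues by contraposition: it first proves an auxiliary lemma (Lemma~\ref{lem:subsummatching}) lifting ``$B$ subsumes $A$ by matching'' to every ground instance $A''\in\gnd(A')$, $A'\in\gnd'(A)$, and then derives a contradiction by observing that the grounded matcher leaves $B\sigma$ with no separating variables, so $\gnd(B\sigma)$ is a singleton $\{B''\}$ that must absorb all of $A''$. You instead construct the witnessing ground class $B'$ forward from a given $A'\in\gnd(A)$, by committing to the grounding $\pi=\{x\mapsto x\sigma\rho\mid x\in X_B\}$ of the separating variables of $B$. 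The load-bearing observation is the same in both arguments --- the matcher $\sigma$ sends separating variables of $B$ to terms over the separating variables of $A$ only, which is what makes the target ground class independent of how the free variables are grounded --- but the paper tucks this into one sentence of the auxiliary lemma's proof (``if not, then every $t\in A$ contains a free variable in $\cdom(\sigma)$, but then these are separating variables''), whereas you isolate it as the explicit claim $\vars(x\sigma)\subseteq X_A$ and prove it from the definition of separating variables plus the variable-disjointness of $A$ and $B$. Your version buys a cleaner account of why a \emph{single} $B'$ suffices and handles the constraint side explicitly via the instantiation $\delta:=\rho\theta$ of the implication in Definition~\ref{def:submatch}; the paper's version buys a reusable intermediate statement about ground instances at the cost of an indirect argument. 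Both are sound; no gap in yours.
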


The converse of Lemma~\ref{lem:subsumptionmatch} does not hold.
Consider symbols 
$f,g,a,b$, $\mathcal{M}=\{a,\allowbreak b,\allowbreak f(a),\allowbreak f(b),\allowbreak g(a),\allowbreak g(b)\}$ and classes 
$A=\{f(x),g(a),g(b)\parallel\allowbreak  f(x),g(a), g(b)\}$ and 
$B=\{f(a),f(b),g(x)\parallel\allowbreak  f(a),f(b), g(x)\}$. Then neither $A$ subsumes $B$ nor $B$
subsumes $A$ by subsumption by matching, although $gnd(A)=gnd(B)$.
However, in Section~\ref{sec:evaluation} we show that the subsumption by matching rule
performs nicely in practice.

Before creating a new class we rename all involved classes and then apply \emph{Merge} or \emph{Deduction}.
Especially, after application of \emph{Deduction} the new class
may contain variables in a constraint that do not occur in any of the class terms.
Subsequent merges then continuously increases the number of constraints and variables. Fortunately,
only one extra variable is needed.

\begin{restatable}{lemma}{reduceconstraints}
\label{lem:reduceconstraints}
Let $A=\{\Gamma\parallel s_1,..., s_n\}$ be a class. Let 
$Y = vars(\Gamma)\setminus (vars(s_1)\cup...\cup vars(s_n))$ be the variables
occurring in $\Gamma$ but not in $s_1,...,s_n$. Let $\sigma:Y\rightarrow \{y'\}$ for
some fresh variable $y'\in\mathcal{X}$.
Then $A$ subsumes $A\sigma$ and $A\sigma$ subsumes $A$.
\end{restatable}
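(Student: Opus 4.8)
The plan is to prove the two subsumption directions separately, using Definition~\ref{def:subsumption} directly via the ground semantics $\gnd$. Since the only difference between $A$ and $A\sigma$ is that the variables in $Y = \vars(\Gamma)\setminus(\vars(s_1)\cup\ldots\cup\vars(s_n))$ — which occur only in the constraint, never in a class term — are collapsed to a single fresh variable $y'$, the key observation is that these variables are \emph{neither} separating variables (they do not occur in the terms at all, so they are not in the intersection $\vars(s_1)\cap\ldots\cap\vars(s_n)$) \emph{nor} free variables (the free variables are defined as a subset of $\vars(s_1)\cup\ldots\cup\vars(s_n)$, which $Y$ is disjoint from). So $Y$ sits entirely inside the constraint and plays a role only through the satisfiability/truth of the $\in\mathcal{M}$ atoms. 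I would unfold Definition~\ref{def:classgrounding} for both classes and show that the resulting families of ground classes coincide.

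\medskip\noindent
\textbf{Direction $A$ subsumes $A\sigma$.}
First I would fix an arbitrary $A'\in\gnd(A\sigma)$ and produce a $B'\in\gnd(A)$ with $A'\subseteq B'$. By Definition~\ref{def:classgrounding}, $A'$ arises from some grounding of the separating variables $X$ followed by groundings for the free variables; since $X$ and $Y$ of the original class are identical whether or not we apply $\sigma$ (because $\sigma$ touches none of the term variables, hence none of the separating or free variables), I would use the \emph{same} groundings on $X$ and $Y$ for $A$. The only remaining obligation is the constraint: every $\in\mathcal{M}$ atom that was satisfiable/true under the $A\sigma$-grounding must stay satisfiable/true under the corresponding $A$-grounding. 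Here the crucial point is that the $Y$-variables are existentially treated — a constrained term $(\Gamma\parallel s)$ contributes $s\rho$ to the ground class precisely when there exists a grounding of the constraint variables making $\Gamma$ true. Collapsing several $Y$-variables into one can only \emph{shrink} the space of available witness groundings, so any witness valid for $A\sigma$ lifts trivially to a witness for $A$ (assign each original $Y$-variable the common value $\sigma$ gave to $y'$). This yields $B'\supseteq A'$, in fact $B'=A'$.

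\medskip\noindent
\textbf{Direction $A\sigma$ subsumes $A$.}
Conversely, fix $A'\in\gnd(A)$ and build $B'\in\gnd(A\sigma)$ with $A'\subseteq B'$, again reusing the identical $X$- and $Y$-groundings. Now I must check that whenever a term $s$ enters $A'$ because some grounding of the distinct $Y$-variables satisfies $\Gamma$, the collapsed constraint $\Gamma\sigma$ is still satisfiable. The point is that $\Gamma$ is a \emph{conjunction} of atoms $t\in\mathcal{M}$, and each $t$ is a (sub)term occurring in the class whose actual variables lie among the term variables; the $Y$-variables occur only as arguments that were already ``orphaned'' from the terms, so any individual atom's satisfiability over $Y$ is independent. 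I would argue that satisfiability of $\Gamma$ (existential over $Y$) is equivalent to satisfiability of $\Gamma\sigma$ (existential over the single $y'$), because each conjunct constrains its own $Y$-variables and one can always choose a common value — this is exactly where the hypothesis that $Y$-variables do \emph{not} occur in the terms is indispensable, since it guarantees no two conjuncts are forced to agree on a value through a shared term position.

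\medskip\noindent
I expect the main obstacle to be the second direction's satisfiability argument: I must rule out the scenario where $\Gamma$ is satisfiable only because two different $Y$-variables take two \emph{different} values, which would be lost after identification by $\sigma$. The resolution hinges on the structural fact that $Y\cap(\vars(s_1)\cup\ldots\cup\vars(s_n))=\emptyset$, so these variables are effectively free and mutually independent inside the constraint; since the constraint is a conjunction of membership atoms and $\mathcal{M}$ is a fixed set closed in the required sense, I would show each atom is satisfiable under the collapsed assignment iff it was under the separate one, and conclude that collapsing cannot destroy satisfiability of the conjunction. Making this independence rigorous — ideally by invoking the $\Pi$-closedness of $\mathcal{M}$ so that the common value $y'$ can always be instantiated to witness every atom simultaneously — is the heart of the proof; the rest is bookkeeping through Definition~\ref{def:classgrounding} and Definition~\ref{def:subsumption}.
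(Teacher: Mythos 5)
Your overall strategy---show $\gnd(A)=\gnd(A\sigma)$ by observing that the $Y$-variables are neither separating nor free and act only existentially inside the constraint---matches the paper's intent, and your first direction ($A$ subsumes $A\sigma$: a witness for the collapsed constraint lifts to one for the original by giving every $Y$-variable the value assigned to $y'$) is correct and fully general. The entire content of the lemma is in the second direction, and there the two arguments diverge. The paper's proof is a two-liner that is specific to the instantiation used for the implementation: with $\mathcal{M}=\term_{\preceq\beta}$ for the symbol-count order, a constraint is satisfiable iff it is satisfiable by substituting a (minimal) constant for \emph{every} variable, so all $Y$-variables can always share a single witness and identifying them is harmless. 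Your route instead appeals to $\Pi$-closedness of $\mathcal{M}$ for a distinguished term $\alpha$: from any satisfying ground assignment, re-map every $Y$-variable to $\alpha$, apply the closure property atom by atom, and note that the resulting assignment factors through $\sigma$. Written out, that is a sound alternative and would cover a wider class of sets $\mathcal{M}$ than the paper's argument, at the cost of checking that the constraint terms of derived classes qualify as ``(sub)terms occurring in $\Pi$'' in the sense of the $\Pi$-closedness definition.

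As it stands, though, your proposal does not complete that step, and one claim you lean on in the interim is false. You assert that ``each conjunct constrains its own $Y$-variables'' and that the atoms' satisfiability over $Y$ is independent because $Y\cap(\vars(s_1)\cup\ldots\cup\vars(s_n))=\emptyset$. Disjointness from the term variables gives no such independence: a single atom $f(y_1,y_2)\in\mathcal{M}$ with $y_1,y_2\in Y$ is exactly the problematic case, and for an arbitrary finite $\mathcal{M}$ containing $f(a,b)$ but no term of the form $f(t,t)$ the collapsed constraint is unsatisfiable while the original is satisfiable---so the lemma is simply false without an additional hypothesis on $\mathcal{M}$. Your closing paragraph correctly identifies the danger and names the right tool, but defers the actual argument. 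To finish, either (a) restrict to the symbol-count setting and argue, as the paper does, that satisfiability is always witnessed by the all-constants substitution, under which all $Y$-variables already agree; or (b) carry out the $\Pi$-closedness step explicitly: if $\Gamma\rho$ is true, the substitution $\rho'$ that agrees with $\rho$ outside $Y$ and maps every $Y$-variable to $\alpha$ still satisfies every atom, and $\Gamma\rho'$ coincides with $\Gamma\sigma$ instantiated by $y'\mapsto\alpha$, which exhibits the required witness for $A\sigma$.
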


To keep the number of constrained terms within classes small, we also need a new condensation rule:
\smallskip

\shortrules{Condensation}
{$\Pi\cup \{\{\Gamma_1\parallel s_1,...,\Gamma_n\parallel s_n\}\}$}
{$\Pi\cup \{\{\Gamma_1\parallel s_1,...,\Gamma_{j-1}\parallel s_{j-1},\Gamma_{j+1}\parallel s_{j+1},...,\Gamma_n\parallel s_n\}\delta\}$}
{provided there exists indices $i,j$ and a matcher $\delta$ such that $s_i\delta = s_j$,
 $\{\Gamma_1\parallel s_1,...,\Gamma_{j-1}\parallel s_{j-1},\Gamma_{j+1}\parallel s_{j+1},...,\Gamma_n\parallel s_n\}\delta$
 subsumes $\{\Gamma_1\parallel s_1,...,\Gamma_n\parallel s_n\}$.
 }{\CCVAR}{14}

\smallskip

For example, the Condensation rule would reduce the class 
$\{f(x),\allowbreak f(y),\allowbreak f(z)\parallel\allowbreak  f(x),\allowbreak  f(y),\allowbreak  f(z)\}$ to $\{f(x),f(y)\parallel f(x), f(y)\}$. Condensation 
together with subsumption by matching ensures termination, because the number of separating variables is bounded.
Condensation is an example where we could improve the performance of our
implementation. It
can be implemented without additional memory consumption, however, our current implementation copies the
class, modifies it and then checks subsumption.

To keep the number of full subsumption checks to a minimum, we have also implemented fast pre-filtering techniques.
It turns out that it is more efficient 
to first check whether for each term in the instance class there is a term in the general class 
that matches this term.

\begin{corollary}
  \label{coro:subfilgenterm}
  Let $A,B$ be two classes. If there exists a $\Gamma\parallel t\in A$ such that there exists no
  $\Delta\parallel s\in B$ and no matcher $\delta$ such that $s\delta = t$, then $A$ is not 
  subsumed by $B$.
\end{corollary}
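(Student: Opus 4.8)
The plan is to prove the contrapositive by directly unfolding subsumption by matching, Definition~\ref{def:submatch}. First a reading issue must be settled: I interpret ``subsumed'' here as \emph{subsumption by matching}, which is the operational test actually used in the implementation, rather than full subsumption of Definition~\ref{def:subsumption}. This reading is forced, since the example preceding Lemma~\ref{lem:subsumptionmatch} exhibits classes with $\gnd(A)=\gnd(B)$ --- hence mutually subsuming in the sense of Definition~\ref{def:subsumption} --- in which the term $f(x)$ of one class is matched by no term of the other; so the statement is false for full subsumption but correct for subsumption by matching. Because subsumption by matching soundly under-approximates full subsumption (Lemma~\ref{lem:subsumptionmatch}), the corollary is exactly the cheap pre-filter that discards class pairs for which the matching test cannot possibly succeed.

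So I would assume, towards a contradiction, that $B$ subsumes $A$ by matching and derive a witness contradicting the hypothesis. Let $X$ and $Y$ be the separating and free variables of $B$. By Definition~\ref{def:submatch} there is a single substitution $\sigma:X\to\mathcal{T}(\Omega,\vars(A))$ under which \emph{every} constrained term of $A$ is covered; applying this to the distinguished $\Gamma\parallel t\in A$ from the hypothesis yields a $\tau:Y\to\mathcal{T}(\Omega,\vars(A))$ and a term $(\Delta\parallel s)\sigma\in B\sigma$ --- that is, a $\Delta\parallel s\in B$ --- with $t=s\sigma\tau$.

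The decisive step is then to collapse $\sigma$ and $\tau$ into one matcher. Since $\dom(\sigma)=X$ and $\dom(\tau)=Y$ are disjoint and jointly exhaust $\vars(s)\subseteq X\cup Y$, and since the $\vars(A)$-terms introduced by $\sigma$ share no variable with $\dom(\tau)\subseteq\vars(B)$, the composite $\delta:=\sigma\tau$ acts as $\sigma$ on $X$ and $\tau$ on $Y$, so it is a well-defined substitution with $s\delta=s\sigma\tau=t$. Thus $\Delta\parallel s\in B$ together with the matcher $\delta$ realizes precisely the configuration the hypothesis forbids, a contradiction; hence $B$ does not subsume $A$ by matching, which is the claim. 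I expect no genuine obstacle beyond this bookkeeping: the only points needing care are the reading of ``subsumed'' discussed above and verifying the disjointness/coverage of $\dom(\sigma)$ and $\dom(\tau)$ so that their composite is a legitimate matcher. The constraint side-condition $\forall\delta.\,(\Gamma\delta\to\exists\delta'.\,\Delta\sigma\tau\delta\delta')$ of Definition~\ref{def:submatch} plays no role, because the failure already occurs at the purely syntactic matching level.
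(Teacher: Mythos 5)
The paper gives no proof of this corollary (it is one of the few statements with no argument in the appendix), so there is nothing to compare against line by line; your proof is correct and is the natural way to fill the gap. Your preliminary reading issue is also resolved the right way: under the ground-semantic subsumption of Definition~\ref{def:subsumption} the statement is indeed false, and the paper's own example with $A=\{f(x),g(a),g(b)\parallel f(x),g(a),g(b)\}$ and $B=\{f(a),f(b),g(x)\parallel f(a),f(b),g(x)\}$ witnesses this ($\gnd(A)=\gnd(B)$, yet no term of $B$ matches onto $f(x)$), so ``subsumed'' must mean subsumed by matching, which is also what the implementation's \textsc{CheckSubsumption} realizes. The core step --- unfolding Definition~\ref{def:submatch} and collapsing $\sigma$ and $\tau$ into the single matcher $\delta=\sigma\tau$ --- is sound; note that $s\sigma\tau=s(\sigma\tau)$ holds for the composed substitution by definition, so the disjointness bookkeeping you carry out (which in any case follows from the paper's standing assumption $\vars(A)\cap\vars(B)=\emptyset$) is not strictly needed for the conclusion, only for your explicit description of how $\delta$ acts on $X$ versus $Y$. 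Your observation that the constraint side-condition of Definition~\ref{def:submatch} is irrelevant here is also correct: the filter operates purely at the syntactic matching level.
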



We use bit vectors to track the number of occurrences of top symbols of terms within a class
to check the above filter.
For each symbol we store in one bit whether there are 0, 1 or more terms in the class that contain 
this symbol as a top symbol, where $[0]_{10} = [0]_2, [1]_{10} = [1]_2$.
For two bit vectors $\mathcal{V}_0$ of a general and $\mathcal{V}_1$ of an instance class 
we compute NOT$(\mathcal{V}_0)$ AND $\mathcal{V}_1$, where NOT and AND are 
bitwise operators. Note that classes containing a variable term must be excluded from this check.

\section{Evaluation}
\label{sec:evaluation}

The first step for an actual implementation and evaluation is to choose a representation for the
set of ground terms $\mathcal{M}$. This could be through an explicit representation
of the ground terms, e.g., by indexing~\cite{NieuwenhuisHRV01} or an implicit representation, e.g.,
by an upper bound with respect to an ordering. For our evaluation we choose the latter, in particular
we define the set $\mathcal{M}$ with respect to the ordering $\preceq$ that simply counts the number
of symbols. Given a finite signature and a finite set of variables and a ground term $\beta$, there are only finitely many different
ground terms smaller than $\beta$, denoted by $\term_{\preceq\beta}$ and  we define $\mathcal{M} = \term_{\preceq\beta}$.
More formally, let $s,t$ be two terms. Then $s\preceq t$ if $\size(s)\leq \size(t)$.
Now $t\in \mathcal{M}$ iff $t\in \term_{\preceq\beta}$ iff $t\preceq\beta$.
The ground term set $\mathcal{M}$ defined this way is $\Pi$-closed with respect to any set of congruence classes
and the constants in $M$, because the constants are minimal in the symbol order.

Solving $\preceq$-constraints is equivalent to
solving linear integer arithmetic constraints.
Note that a constraint $\Gamma$ is satisfiable iff it is satisfiable without applying any
substitution, because variables and constants are the smallest terms in the symbol count order.
Regarding the implication 
$\forall\delta. (\Gamma\delta\rightarrow\exists \delta'.\Delta\sigma\tau\delta\delta')$, see Definition~\ref{def:submatch}, the quantifier alternation can be removed, so 
$\forall\delta. (\Gamma\delta\rightarrow\exists \delta'.\Delta\sigma\tau\delta\delta')$ holds
iff $\forall\delta. (\Gamma\delta\rightarrow\Delta\sigma\tau\delta)$ holds. This is due 
to the fact that there exists such a $\delta'$ in the symbol count order iff all variables are 
replaced by a smallest term. But again, according to the definition of 
the symbol count order, the variable is already a smallest term.

\begin{definition}[LIA Constraint Abstraction]
  \label{ccx:liaconstr}
  Let $t\preceq\beta$ be a constraint. Let $vars(t) = \{x_1,...,x_n\}$. Then 
  $\lic(t\preceq\beta)= x_1\geq 1\land ...\land x_n\geq 1\land \#(x_1,t)*x_1 + ... + 
  \#(x_n,t)*x_n\leq \size(\beta) - (\size(t)- \sum_{1\leq i\leq n}\#(x_i,t))$
  is the linear arithmetic constraint of $t\preceq\beta$.
\end{definition}

\begin{restatable}[Correctness LIA Constraint Abstraction]{lemma}{corcoabstr}
  \label{lem:corcoabstr}
  Let $t\preceq\beta$ be a constraint, $vars(t) = \{x_1,...,x_n\}$.
  \begin{enumerate}
    \item For any ground substitution $\sigma = \{x_i\mapsto s_i ~|~ 1\leq i \leq n\}$:
      if $t\sigma\preceq\beta$ is  true then $\lic(t\preceq\beta)\{x_i\mapsto \size(s_i)~|~1\leq i\leq n\}$ is true.
    \item For any substitution  $\sigma = \{x_i\mapsto k_i ~|~ 1\leq i\leq n, k_i\in\Nat\}$:
      if $\lic(t\preceq\beta)\sigma$ is true then $t\delta\preceq\beta$ is true for
      all $\delta = \{x_i\mapsto s_i ~|~ 1\leq i\leq n\}$ where all $s_i$ are ground, and $\size(s_i) = k_i$.
      \end{enumerate}
    \end{restatable}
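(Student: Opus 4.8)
The plan is to reduce both statements to a single arithmetic identity describing how $\size$ behaves under substitution, after which both directions follow by elementary rearrangement.

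First I would establish the key size formula: for a term $t$ with $\vars(t) = \{x_1,\ldots,x_n\}$ and any substitution $\sigma = \{x_i\mapsto s_i\}$,
$$\size(t\sigma) = \size(t) + \sum_{i=1}^n \#(x_i,t)\,(\size(s_i)-1).$$
I would prove this by structural induction on $t$. The base cases are a constant (where both sides equal $1$) and a variable $x_j$ (where the left side is $\size(s_j)$ and the right side is $1 + (\size(s_j)-1)$). For the step $t = f(u_1,\ldots,u_m)$ I would use $\size(f(u_1,\ldots,u_m)) = 1 + \sum_j \size(u_j)$ together with $\#(x_i,t) = \sum_j \#(x_i,u_j)$, apply the induction hypothesis to each $u_j$, and collect terms. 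The intuition is that each of the $\#(x_i,t)$ occurrences of $x_i$ contributes one symbol to $t$ but $\size(s_i)$ symbols to $t\sigma$, for a net change of $\size(s_i)-1$.

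Next I would observe that, since $\preceq$ is the symbol-count order in this section and both $t\sigma$ (resp.\ $t\delta$) and $\beta$ are ground, $t\sigma\preceq\beta$ is by definition $\size(t\sigma)\leq\size(\beta)$. Plugging the size formula into this inequality and moving the constant term to the right, $\size(t\sigma)\leq\size(\beta)$ is equivalent to
$$\sum_{i=1}^n \#(x_i,t)\,\size(s_i) \;\leq\; \size(\beta) - \bigl(\size(t) - \sum_{i=1}^n \#(x_i,t)\bigr),$$
which is exactly the main conjunct of $\lic(t\preceq\beta)$ evaluated at $x_i\mapsto\size(s_i)$. Part~1 then follows: the hypothesis $t\sigma\preceq\beta$ gives this inequality, and the remaining conjuncts $x_i\geq 1$ hold because every ground $s_i$ has $\size(s_i)\geq 1$ (as $\opers$ contains a constant, the smallest ground term has size $1$). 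For Part~2, the truth of $\lic(t\preceq\beta)$ at $x_i\mapsto k_i$ supplies the displayed inequality with $\size(s_i)=k_i$; running the same equivalence backwards through the size formula yields $\size(t\delta)\leq\size(\beta)$, i.e.\ $t\delta\preceq\beta$, for every ground $\delta$ with $\size(s_i)=k_i$.

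I expect the only real care to be in the $-1$ bookkeeping of the size formula — tracking that a variable occurrence costs one symbol before substitution but $\size(s_i)$ afterwards — and in verifying that the algebraic rearrangement matches the exact shape of $\lic$ in Definition~\ref{ccx:liaconstr}. There is no genuine obstacle beyond this: once the size formula is in place, both parts are immediate equivalences, the only side condition being $\size(s_i)\geq 1$ for ground terms, used to discharge the $x_i\geq 1$ conjuncts.
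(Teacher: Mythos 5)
Your proof is correct and is essentially the argument the paper intends: the paper's own proof is just the one-line ``by applying the definitions,'' and your size formula $\size(t\sigma)=\size(t)+\sum_i \#(x_i,t)(\size(s_i)-1)$ together with the algebraic rearrangement is exactly the content of unfolding those definitions. The bookkeeping matches the shape of $\lic$ in Definition~\ref{ccx:liaconstr}, and the $x_i\geq 1$ conjuncts are handled correctly in both directions.
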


    \begin{restatable}[LIA Quantifier Elimination]{lemma}{cotcoelim}
      \label{lem:cotcoelim}
  Let $\beta$ be a ground term and 
  $\Gamma_1 = \{s_1\preceq \beta\land...\land s_n\preceq \beta\}$ and
  $\Gamma_2 = \{t_1\preceq \beta\land...\land t_m\preceq \beta\}$ be two 
  constraints. Then $\forall\sigma. (\Gamma_1\sigma\rightarrow\exists 
  \sigma'.\Gamma_2\sigma\sigma')$ iff 
  $\lic(s_1\preceq \beta)\land ...\land \lic(s_n\preceq \beta)\rightarrow 
  \lic(t_1\preceq \beta)\land ...\land \lic(t_m\preceq \beta)$.
\end{restatable}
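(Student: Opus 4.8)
The plan is to reduce both sides of the claimed equivalence to statements about term sizes, so that the term-level quantifier alternation collapses onto the LIA implication. The bridge is the identity underlying Lemma~\ref{lem:corcoabstr}: for any ground $\sigma=\{x_i\mapsto u_i\}$ we have $\size(t\sigma)=\size(t)-\sum_i\#(x_i,t)+\sum_i\#(x_i,t)\cdot\size(u_i)$, since each of the $\#(x_i,t)$ occurrences of $x_i$ (one symbol each) is replaced by $u_i$. Consequently $t\sigma\preceq\beta$ holds iff $\lic(t\preceq\beta)$ holds under the assignment $x_i\mapsto\size(u_i)$. First I would record this equivalence for every $s_i$ and $t_j$, reducing $\Gamma_1\sigma$ to $\bigwedge_i\lic(s_i\preceq\beta)$ evaluated at the sizes induced by $\sigma$, and similarly reducing $\Gamma_2\sigma\sigma'$ to $\bigwedge_j\lic(t_j\preceq\beta)$.

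Next I would eliminate the inner existential $\exists\sigma'$. Writing $Y$ for the variables of $\Gamma_2$ not already instantiated by $\sigma$, the size of each $t_j\sigma\sigma'$ is, by the identity above, monotonically non-decreasing in the sizes $\size(y\sigma')$ for $y\in Y$, because the coefficients $\#(y,t_j)$ are non-negative. Hence each constraint $t_j\sigma\sigma'\preceq\beta$ is easiest to satisfy when every $y\in Y$ is mapped to a smallest ground term, i.e.\ a constant $c$ with $\size(c)=1$; so $\exists\sigma'.\,\Gamma_2\sigma\sigma'$ holds iff $\Gamma_2\sigma\sigma'$ holds for this minimal $\sigma'$. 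This is exactly the quantifier-alternation removal described in Section~\ref{sec:evaluation}, and on the LIA side it matches the lower bounds $y\ge 1$ built into each $\lic(t_j\preceq\beta)$: the minimal admissible value of a fresh LIA variable is $1$, corresponding to a term of size $1$. Thus the existential over $\sigma'$ is faithfully captured by instantiating the $Y$-variables to their LIA minimum $1$, which is subsumed by reading $\bigwedge_j\lic(t_j\preceq\beta)$ together with its $\ge 1$ bounds.

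It then remains to match the outer universal. By the two parts of Lemma~\ref{lem:corcoabstr}, mapping a ground $\sigma$ to the assignment $x_i\mapsto\size(x_i\sigma)$ sends grounding substitutions satisfying $\Gamma_1$ to LIA points satisfying $\bigwedge_i\lic(s_i\preceq\beta)$, and conversely every LIA point with all coordinates $\ge 1$ satisfying the $\lic(s_i\preceq\beta)$ is realized by some ground $\sigma$ of the corresponding sizes. I would therefore argue both directions of the biconditional. For $(\Leftarrow)$, given $\sigma$ with $\Gamma_1\sigma$ true, its sizes satisfy the LIA antecedent, the LIA implication yields the consequent, and Lemma~\ref{lem:corcoabstr}(2) lets me read off a witnessing $\sigma'$ mapping $Y$ to constants. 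For $(\Rightarrow)$, given any LIA point $\ge 1$ satisfying the antecedent, I must produce a ground $\sigma$ of those sizes, apply the term-level statement to obtain a $\sigma'$, push the resulting satisfied constraint back through Lemma~\ref{lem:corcoabstr}(1), and use the monotonicity of the previous paragraph to lower the $Y$-values down to $1$.

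The hard part will be the realizability step inside $(\Rightarrow)$: I must guarantee that every tuple of naturals $\ge 1$ arising as an LIA solution is actually the tuple of sizes of some ground substitution. This is where the assumption that $\opers$ contains a constant (a ground term with $\size=1$) and a non-constant function symbol is needed to build ground terms of the required sizes. I would isolate this as an auxiliary fact (for every $k\ge 1$ there is a ground term $u$ with $\size(u)=k$) and check that the construction covers all coordinates simultaneously. The remaining steps are the routine arithmetic of the size identity and the non-negativity of the coefficients, so the whole argument hinges on (i) the size identity of Lemma~\ref{lem:corcoabstr}, (ii) monotonicity for the existential, and (iii) size-realizability for the universal.
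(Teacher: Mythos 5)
Your plan follows the paper's own (one-line) proof exactly: the published argument simply cites Lemma~\ref{lem:corcoabstr} for the size/LIA correspondence and the remark in Section~\ref{sec:evaluation} that the inner existential can be discharged by sending the remaining variables to smallest terms; your ingredients (i) and (ii) are precisely these two points, worked out in detail, and your monotonicity argument for eliminating $\exists\sigma'$ is correct.

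The one place where you go beyond the paper is ingredient (iii), and you are right that it is the load-bearing step of the $(\Rightarrow)$ direction --- but the auxiliary fact you propose (every $k\geq 1$ is the size of some ground term) does \emph{not} follow from the paper's standing assumption that $\opers$ contains a constant and a non-constant function symbol. Over $\opers=\{a/0,g/2\}$ every ground term has odd size. Taking $\size(\beta)=7$, $s_1=g(g(x,x),a)$ and $t_1=g(g(g(x,a),a),a)$, the only ground instance satisfying $s_1\sigma\preceq\beta$ is $x\mapsto a$, and it also satisfies $t_1\sigma\preceq\beta$, so the term-level side of the equivalence holds; yet $\lic(s_1\preceq\beta)$ is $1\leq x\leq 2$ while $\lic(t_1\preceq\beta)$ is $x=1$, so the LIA implication fails at the unrealizable value $x=2$. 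Hence either size-realizability must be added as a hypothesis (e.g.\ the presence of a unary function symbol), or the lemma should be weakened to the $(\Leftarrow)$ direction, which needs nothing beyond the existence of constants and is the only direction required for soundness of subsumption by matching (Definition~\ref{def:submatch}, Lemma~\ref{lem:subsumptionmatch}); losing $(\Rightarrow)$ only makes the LIA test incomplete, not unsound. The paper's proof is silent on all of this, so your proposal is more careful than the published argument, but the auxiliary fact cannot be ``checked'' as stated --- it genuinely fails for some signatures the paper admits.
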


Thus checking if a $\preceq$-constraint $\Gamma$ models a $\preceq$-constraint $\Gamma'$ 
reduces to a linear integer arithmetic implication test. We make use of the linear arithmetic solver
implemented in \textsc{Spass}-SATT~\cite{BrombergerEtAl19}. 

With regard to reducing the number of constraints as seen in Lemma \ref{lem:reduceconstraints} we can further 
reduce the number of constraints in our specific case of $\preceq$ by removing terms that have the same variable occurrences
and are smaller or equal to some other term in the constraint according to $\preceq$,
e.g. $f(x)$ can be removed if $f(g(x))$ already exists.

We process classes using a \emph{worked-off} and \emph{usable} queue.
Initially the \emph{worked-off} queue contains all single-term classes and \emph{usable} contains the initial classes
that are created from the input equations. In each loop we select a class from the \emph{usable} queue, 
perform all possible \emph{Merge} and \emph{Deduction} steps on it and add the class to the 
\emph{worked-off} queue afterwards if not subsumed. Newly created classes are added to the usable queue.

Finally, we record whether merge or subsumption checks have already been applied on classes. To find 
candidates for the subsumption rule we maintain an index for each term in which classes it occurs.
General terms are retrieved by a discrimination tree index, unifiable terms and instances of a term
by a path index~\cite{McCune92a}.

We evaluated our algorithm on all unit equality (UEQ) problems from TPTP-v8.2.0~\cite{Sutcliffe17}. 
From each problem we created two benchmark problems: one with all inequations removed and
the other by turning inequations into equations.
We choose two fixed nesting depths of $6$ and $8$ in order to construct $\beta$.
It is constructed by nesting all
function symbols in one argument up to the chosen depth and filling all other 
arguments with a constant.
E.g., with function symbols $a/0$,$f/1$,$g/2$,$h/1$,$i/2$ and nesting depth $4$ we 
create the term $\beta = i(h(g(f(a), a)),a)$. Then the size of all terms is limited to $7$~symbols.

We created another evaluation on the SMT-LIB UF non-incremental problems (release 2024)~\cite{BarFT-SMTLIB} for a nesting
depth of 2 without those examples that use the distinct operator.
In order to make these examples available for untyped  congruence closure, we carried out various pre-processing steps:
we expanded let operators, removed the typing and quantifiers, coded predicates as equations and
did a CNF transformation. We then selected the first equational literal in each clause that contained at least one variable,
excluding equations of the form $x\approx y$, $x\approx t$, or $t\approx x$, as input for \CCVAR. 
If no such literal was found, we instead chose the first ground literal and skipped the clause if this literal was also absent.
We skipped the above mentioned equations because there existence typically resulted in trivial problems in the case of \CCVAR, 
partly because the typing was removed.

We compared the performance of our algorithm to the performance of a \CC\ 
implementation based on the implementation in the veriT solver~\cite{bouton2009verit}.
The resulting benchmark problems turn out to be challenging for both algorithms.

\CCVAR~ provides a 
solution to the entire ground input space smaller than $\beta$.
 Therefore,  for the comparison of the two algorithms, we feed all ground terms smaller 
than $\beta$ into \CC. \CCVAR~also provides a 
solution to the entire ground input space. We skip all examples where no equation has ground 
instances $\preceq\beta$. 

Experiments for the UEQ problems are performed on a Debian Linux server running AMD EPYC 7702 64-core CPUs 
with 3.35GHz and a total memory of 2TB and for the SMT examples they were performed on
AMD EPYC 9754 128-Core Processors with 3.1 Ghz and a total memory of 2.2TB.
The time limit for each test is 30 minutes.
The results of all runs as well as all input files 
and binaries can be found at \url{https://nextcloud.mpi-klsb.mpg.de/index.php/s/ttjj3tDBebgtmHj}
for the UEQ problems and at \url{https://nextcloud.mpi-klsb.mpg.de/index.php/s/769Ca7Dc9Ck88pi} for the UF problems.

For a nesting depth of $6$ on the UEQ problems, \CCVAR~ terminates on 519 
and CC on 457 of the 2900 problems. \CCVAR~is faster on 474 problems and 
\CC\  on 172. \CCVAR\ terminated on 189 examples where \CC\  timed out, and \CC\  terminated on
127 examples where \CCVAR\ timed out.





Figure~\ref{ccx:eval:time6} shows the results of all terminating test cases for the runtime and Figure~\ref{ccx:eval:classes6} shows the results for the number of classes.
The performance of \CCVAR\  currently drops if there are many different variables.
This is mainly due to the current implementation of the 
redundancy checks. Concerning the number of classes, the number of classes generated by
\CCVAR\ is significantly smaller than the number in \CC\  for almost all examples. Examples where this does not hold are border cases, i.e., they only contain few equations or contain only one constant.

\begin{figure}[h]
  \begin{subfigure}[h]{0.45\linewidth}
    \resizebox {\textwidth} {!} {
    \begin{tikzpicture}
      \begin{axis}[xmode=log,ymode=log, xlabel={Time \CCVAR},
                   y label style={at={(axis description cs:0.1,.5)},anchor=south},
                   x label style={at={(axis description cs:0.5,0.02)},anchor=north},
                  ylabel={Time \CC},
                  unit vector ratio*=1 1 1, 
                  samples=10000,
                  xmin=0.1,
                  ymin=0.1,
                  ymax=5000,
                  xmax=5000]
      \addplot+ [color=black,mark=x] table [only marks, col sep=comma,x index=1, y index=2] {comparisontime6.txt};  
      \addplot [color=black,no marks] coordinates {(0.000001, 0.000001) (5000, 5000)};
      \end{axis}
    \end{tikzpicture}
    }
  \caption{\label{ccx:eval:time6}Comparison of the runtime of \CC\  and \CCVAR.}
  \end{subfigure}
  \hfill
  \begin{subfigure}[h]{0.45\linewidth}
    \resizebox {\textwidth} {!} {
    \begin{tikzpicture}
      \begin{axis}[xmode=log,ymode=log, xlabel={\#Classes \CCVAR}, ylabel={\#Classes \CC\ },
        y label style={at={(axis description cs:0.1,.5)},anchor=south},
        x label style={at={(axis description cs:0.5,0.02)},anchor=north},
                   unit vector ratio*=1 1 1, samples=10000,xmin=1,ymin=1,ymax=3000000,xmax=3000000]
      \addplot+ [color=black,mark=x] table [only marks, col sep=comma,x index=2, y index=1] {comparisonclasses6.txt};  
      \addplot [color=black,no marks] coordinates {(0.000001, 0.000001) (3000000, 3000000)};
      \end{axis}
    \end{tikzpicture}
    }
  \caption{\label{ccx:eval:classes6}Comparison of the number of classes of \CC\  and \CCVAR. }
  \end{subfigure}%
  \caption{Benchmark results for a nesting depth of 6.}
\end{figure}

\begin{figure}[h]
  \begin{subfigure}[h]{0.45\linewidth}
    \resizebox {\textwidth} {!} {
      \begin{tikzpicture}
        \begin{axis}[xmode=log,ymode=log, xlabel={Time \CCVAR}, ylabel={Time \CC},
                     y label style={at={(axis description cs:0.1,.5)},anchor=south},
                     x label style={at={(axis description cs:0.5,0.02)},anchor=north},
                     unit vector ratio*=1 1 1, samples=10000,xmin=0.1,ymin=0.1,ymax=5000,xmax=5000]
        \addplot+ [color=black,mark=x] table [only marks, col sep=comma,x index=1, y index=2] {comparisontime8.txt};  
        \addplot [color=black,no marks] coordinates {(0.000001, 0.000001) (5000, 5000)};
        \end{axis}
      \end{tikzpicture}
    }
  \caption{\label{ccx:eval:time8}Comparison of the runtime of \CC\  and \CCVAR.}
  \end{subfigure}
  \hfill
  \begin{subfigure}[h]{0.45\linewidth}
    \resizebox {\textwidth} {!} {
      \begin{tikzpicture}
        \begin{axis}[xmode=log,ymode=log, xlabel={\#Classes \CCVAR}, ylabel={\#Classes \CC},
          y label style={at={(axis description cs:0.1,.5)},anchor=south},
          x label style={at={(axis description cs:0.5,0.02)},anchor=north},        
                     unit vector ratio*=1 1 1, samples=10000,xmin=1,ymin=1,ymax=1000000,xmax=1000000]
        \addplot+ [color=black,mark=x] table [only marks, col sep=comma,x index=2, y index=1] {comparisonclasses8.txt};  
        \addplot [color=black,no marks] coordinates {(0.000001, 0.000001) (3000000, 3000000)};
        \end{axis}
      \end{tikzpicture}
    }
  \caption{\label{ccx:eval:classes8}Comparison of the number of classes of \CC\  and \CCVAR. }
  \end{subfigure}%
  \caption{Benchmark results for a nesting depth of 8.}
  \end{figure}

For a nesting depth of $8$ on the UEQ problems, 299 of 2900 problems terminate in \CCVAR\ 
and 102 in \CC. \CCVAR~is faster in 294 terminating examples and 
\CC\ in 24. \CCVAR\ terminated on 216 examples where \CC\ timed out and \CC\ terminated on
19 examples where \CCVAR\ timed out.





Figure~\ref{ccx:eval:time8} shows the results of all terminating test cases for the runtime and Figure~\ref{ccx:eval:classes8} shows the results for the number of classes.
\CCVAR~is particularly advantageous with a large $\beta$,
where the number of ground equations become very large. The number of classes are again significantly 
smaller than in \CC, except for the already mentioned border case examples.

  For a nesting depth of $2$ on the SMT examples, 1123 of 7516 problems terminate in \CCVAR\ 
  and only 561 in \CC. \CCVAR~is faster in 886 terminating examples and 
  \CC\ in 277. \CCVAR\ terminated on 839 examples where \CC\ timed out and \CC\ terminated on
  40 examples where \CCVAR\ timed out.
  The significantly better performance of \CCVAR\ is mainly due to the fact that the problems 
  in the SMT-LIB contain considerably more symbols, which creates a huge number of ground
  equations even for small nesting depth sizes.
  
  
  
  
  \vspace{2pt}
  
  Figure~\ref{ccx:eval:time2-smt} shows the results of all terminating test cases for the runtime and 
  Figure~\ref{ccx:eval:classes2-smt} shows the results for the number of classes.
  Here, we can see that \CCVAR~is already advantageous with a small $\beta$. The number of classes are again  
  smaller in all cases.
  
  \begin{figure}
    \begin{subfigure}[h]{0.45\linewidth}
      \resizebox {\textwidth} {!} {
        \begin{tikzpicture}
          \begin{axis}[xmode=log,ymode=log, xlabel={Time \CCVAR}, ylabel={Time \CC},
                       y label style={at={(axis description cs:0.1,.5)},anchor=south},
                       x label style={at={(axis description cs:0.5,0.02)},anchor=north},
                       unit vector ratio*=1 1 1, samples=10000,xmin=0.1,ymin=0.1,ymax=5000,xmax=5000]
          \addplot+ [color=black,mark=x] table [only marks, col sep=comma,x index=1, y index=2] {comparisontime2-smt.txt};  
          \addplot [color=black,no marks] coordinates {(0.000001, 0.000001) (5000, 5000)};
          \end{axis}
        \end{tikzpicture}
      }
    \caption{\label{ccx:eval:time2-smt}Comparison of the runtime of \CC\  and \CCVAR.}
    \end{subfigure}
    \hfill
    \begin{subfigure}[h]{0.45\linewidth}
      \resizebox {\textwidth} {!} {
        \begin{tikzpicture}
          \begin{axis}[xmode=log,ymode=log, xlabel={\#Classes \CCVAR}, ylabel={\#Classes \CC},
            y label style={at={(axis description cs:0.1,.5)},anchor=south},
            x label style={at={(axis description cs:0.5,0.02)},anchor=north},        
                       unit vector ratio*=1 1 1, samples=10000,xmin=1,ymin=1,ymax=1000000,xmax=1000000]
          \addplot+ [color=black,mark=x] table [only marks, col sep=comma,x index=2, y index=1] {comparisonclasses2-smt.txt};  
          \addplot [color=black,no marks] coordinates {(0.000001, 0.000001) (3000000, 3000000)};
          \end{axis}
        \end{tikzpicture}
      }
    \caption{\label{ccx:eval:classes2-smt}Comparison of the number of classes of \CC\  and \CCVAR. }
    \end{subfigure}%
    \caption{Benchmark results for a nesting depth of 2 on the SMT UF examples.}
  \end{figure}

The following table shows the average and median time and number of classes of examples terminating in both \CCVAR\ and \CC\ 
for nesting depth $6$, $8$ and $2$ in the SMT case.
For the UEQ examples, one can see that \CCVAR\ is significantly faster on average and produces only a fraction of the classes 
of the ground \CC.
The difference is even stronger when looking at the median. Here \CCVAR\ only needs 1 second or 
less for half of all terminating examples whereas \CC\ needs more than ten minutes.
In the SMT case, the numbers are slightly worse, but \CCVAR\ still only takes half the 
time both on average and on median and produces significantly fewer classes.


\centerline{\renewcommand{\arraystretch}{1.2}
\begin{tabular}{r|r|r|r|r|r|r|r|r}
  \multicolumn{1}{c}{} &  \multicolumn{4}{c}{Average}     &  \multicolumn{4}{c}{Median}  \\
    \multicolumn{1}{c}{} &  \multicolumn{2}{c}{Time}     &  \multicolumn{2}{c}{\#Classes} 
    &  \multicolumn{2}{c}{Time}     &  \multicolumn{2}{c}{\#Classes}   \\
    \multicolumn{1}{c|}{Nesting Depth} &  \multicolumn{1}{c|}{\CCVAR}  & \multicolumn{1}{c|}{\CC}   & \multicolumn{1}{c|}{\CCVAR} & \multicolumn{1}{c|}{\CC}  &  \multicolumn{1}{c|}{\CCVAR}  & \multicolumn{1}{c|}{\CC}    & \multicolumn{1}{c|}{\CCVAR} & \multicolumn{1}{c}{\CC} \\ \hline
  6 &  155  &   352.38  & 1199 &    55436 &  0.06  &   57.2  & 107 &    11819   \\
  8  &  94.14 &  720.52    &   2844  & 145705  &  0.001 &  795.73    &   6  & 165145 \\
  2-SMT  &  83.26 &  128.11    &   7731  & 102024  &  2.27 &  5.68    &   2093  & 14958 \\
\end{tabular}}

\vspace{4pt}

\section{Discussion and Conclusion}
\label{sec:discussion}

We presented the new calculus non-ground congruence closure (\CCVAR). 
It takes as input non-ground equations and computes the corresponding congruence classes
respect to a finite set of ground terms $\mathcal{M}$.
The algorithm is sound, complete, and terminating due to a notion of redundancy
and the implicit, finite ground input space.
We implemented redundancy concepts, e.g.,
by introducing filters to expensive checks such as subsumption.
The resulting implementation of \CCVAR\ outperforms \CC\ in the vast
majority of benchmark examples taken from the TPTP~\cite{Sutcliffe17} and SMT-LIB~\cite{BarFT-SMTLIB}.

Still there is room for further improvement.
From an implementation point of view,
as already mentioned, \emph{Condensation} modifies a copy 
of the class and checks for subsumption. 
In the \emph{Merge} and \emph{Deduction} rules, the number of copies of classes is also higher than necessary, 
especially when the generated class is subsumed.
For some border cases \CC\ outperforms \CCVAR. Extending $\Rightarrow_{\CCVAR}$ to cope
with input equations with only a few constants or few equations is a further line of research.
Already now \CCVAR\ can decide shallow equational classes, where the only arguments to functions
are variables. This is independently of $\beta$ and due to our notion of redundancy.

Equality checking between ground terms amounts to instance finding in a particular class, once
the congruence closure algorithm is finished, both for \CC\ and \CCVAR\ where for the latter
this has to be done modulo matching.
Checking the equality of  non-ground terms is much more involved both for \CCVAR\ and \CC. This is mainly due to the fact that we
consider a finite  signature.
If two non-ground terms are not in the same class this does not actually mean that they are not equal,
since it could be the case that all ground instances of these terms are in the same class.

In general, \CCVAR~ outperforms \CC\  if the number of different variables in equations is smaller
than the number of ground terms to be considered. The other way round, if there are many variables but only a few ground
terms to consider, then running \CC\  is beneficial. This situation can be easily checked in advance,
so \CCVAR~can be selected on problems where \CC\  will fail extending the overall scope of applicability.


%
%
%
%
\bibliographystyle{splncs04}
\bibliography{paper}

\begin{thebibliography}{10}
\providecommand{\url}[1]{\texttt{#1}}
\providecommand{\urlprefix}{URL }
\providecommand{\doi}[1]{https://doi.org/#1}

\bibitem{BaaderNipkow98}
Baader, F., Nipkow, T.: Term Rewriting and All That. Cambridge University Press
  (1998)

\bibitem{BarbosaBBKLMMMN22}
Barbosa, H., Barrett, C.W., Brain, M., Kremer, G., Lachnitt, H., Mann, M.,
  Mohamed, A., Mohamed, M., Niemetz, A., N{\"{o}}tzli, A., Ozdemir, A.,
  Preiner, M., Reynolds, A., Sheng, Y., Tinelli, C., Zohar, Y.: cvc5: {A}
  versatile and industrial-strength {SMT} solver. In: Fisman, D., Rosu, G.
  (eds.) Tools and Algorithms for the Construction and Analysis of Systems -
  28th International Conference, {TACAS} 2022, Held as Part of the European
  Joint Conferences on Theory and Practice of Software, {ETAPS} 2022, Munich,
  Germany, April 2-7, 2022, Proceedings, Part {I}. Lecture Notes in Computer
  Science, vol. 13243, pp. 415--442. Springer (2022)

\bibitem{BarbosaEtAl17}
Barbosa, H., Fontaine, P., Reynolds, A.: Congruence closure with free
  variables. In: Legay, A., Margaria, T. (eds.) Tools and Algorithms for the
  Construction and Analysis of Systems - 23rd International Conference, {TACAS}
  2017. Lecture Notes in Computer Science, vol. 10206, pp. 214--230 (2017)

\bibitem{BarFT-SMTLIB}
Barrett, C., Fontaine, P., Tinelli, C.: {The Satisfiability Modulo Theories
  Library (SMT-LIB)}. {\texttt{www.SMT-LIB.org}} (2016)

\bibitem{BiereEtAl09handbook}
Biere, A., Heule, M., van Maaren, H., Walsh, T. (eds.): Handbook of
  Satisfiability, Frontiers in Artificial Intelligence and Applications,
  vol.~185. IOS Press (2009)

\bibitem{bouton2009verit}
Bouton, T., Caminha B.~de Oliveira, D., D{\'e}harbe, D., Fontaine, P.: veri{T}:
  an open, trustable and efficient {SMT}-solver. In: International Conference
  on Automated Deduction. pp. 151--156. Springer (2009)

\bibitem{BoutonODF09}
Bouton, T., Oliveira, D.C.B.D., D{\'{e}}harbe, D., Fontaine, P.: veri{T}: An
  open, trustable and efficient {SMT}-solver. In: Schmidt, R.A. (ed.) Automated
  Deduction - CADE-22, 22nd International Conference on Automated Deduction,
  Montreal, Canada, August 2-7, 2009. Proceedings. Lecture Notes in Computer
  Science, vol.~5663, pp. 151--156. Springer (2009)

\bibitem{BrombergerEtAl19}
Bromberger, M., Fleury, M., Schwarz, S., Weidenbach, C.: {SPASS-SATT} - {A}
  {CDCL(LA)} solver

\bibitem{CimattiEtAl13}
Cimatti, A., Griggio, A., Schaafsma, B.J., Sebastiani, R.: The {MathSAT5} {SMT}
  solver. In: Piterman, N., Smolka, S.A. (eds.) Tools and Algorithms for the
  Construction and Analysis of Systems - 19th International Conference, {TACAS}
  2013, Held as Part of the European Joint Conferences on Theory and Practice
  of Software, {ETAPS} 2013, Rome, Italy, March 16-24, 2013. Proceedings.
  vol.~7795, pp. 93--107. Springer (2013)

\bibitem{DershowitzPlaisted01handbook}
Dershowitz, N., Plaisted, D.A.: Rewriting. In: Robinson, A., Voronkov, A.
  (eds.) Handbook of Automated Reasoning, vol.~I, chap.~9, pp. 535--610.
  Elsevier (2001)

\bibitem{DowneySethiEtAl80}
Downey, P.J., Sethi, R., Tarjan, R.E.: Variations on the common subexpression
  problem. Journal of the ACM  \textbf{27}(4),  758--771 (1980)

\bibitem{Dutertre2014}
Dutertre, B.: Yices 2.2. In: Biere, A., Bloem, R. (eds.) Computer-Aided
  Verification (CAV'2014). Lecture Notes in Computer Science, vol.~8559, pp.
  737--744. Springer (July 2014)

\bibitem{fontaine2004techniques}
Fontaine, P.: Techniques for verification of concurrent systems with
  invariants. Ph.D. thesis, PhD thesis, Institut Montefiore, Universit{\'e} de
  Liege, Belgium (2004)

\bibitem{GanzingerHagenEtAl04}
Ganzinger, H., Hagen, G., Nieuwenhuis, R., Oliveras, A., Tinelli, C.:
  {DPLL(T)}:\ fast decision procedures. In: Alur, R., Peled, D.A. (eds.) 16th
  International Conference, CAV 2004. LNCS '04, vol.~3114, pp. 176--188.
  Springer (2004)

\bibitem{Hurd01}
Hurd, J.: Congruence classes with logic variables. Log. J. {IGPL}
  \textbf{9}(1),  53--69 (2001)

\bibitem{BayardoSchrag96}
Jr., R.J.B., Schrag, R.: Using {CSP} look-back techniques to solve
  exceptionally hard {SAT} instances. In: Freuder, E.C. (ed.) Proceedings of
  the Second International Conference on Principles and Practice of Constraint
  Programming, Cambridge, Massachusetts, USA, August 19-22, 1996. LNCS,
  vol.~1118, pp. 46--60. Springer (1996)

\bibitem{KnuthBendix70}
Knuth, D.E., Bendix, P.B.: Simple word problems in universal algebras. In:
  Leech, I. (ed.) Computational Problems in Abstract Algebra, pp. 263--297.
  Pergamon Press (1970)

\bibitem{Leidinger2023}
Leidinger, H., Weidenbach, C.: {SCL(EQ):} {SCL} for first-order logic with
  equality. Journal of Automated Reasoning  \textbf{67}(3), ~22 (2023)

\bibitem{McCune92a}
McCune, W.: Experiments with discrimination-tree indexing and path indexing for
  term retrieval. J. Autom. Reason.  \textbf{9}(2),  147--167 (1992)

\bibitem{MoskewiczMadiganZhaoZhangMalik01}
Moskewicz, M.W., Madigan, C.F., Zhao, Y., Zhang, L., Malik, S.: Chaff:
  Engineering an efficient {SAT} solver. In: Design Automation Conference,
  2001. Proceedings. pp. 530--535. ACM (2001)

\bibitem{deMouraBjorner:08}
de~Moura, L., Bj{\o}rner, N.: Z3: An efficient {SMT} solver. In: Tools and
  Algorithms for the Construction and Analysis of Systems, LNCS, vol.~4963
  (2008)

\bibitem{NelsonOppen80}
Nelson, G., Oppen, D.C.: Fast decision procedures based on congruence closure.
  Journal of the ACM  \textbf{27}(2),  356--364 (1980)

\bibitem{NieuwenhuisHRV01}
Nieuwenhuis, R., Hillenbrand, T., Riazanov, A., Voronkov, A.: On the evaluation
  of indexing techniques for theorem proving. In: Gor{\'e}, R., Leitsch, A.,
  Nipkow, T. (eds.) Automated Reasoning, First International Joint Conference,
  IJCAR 2001, Siena, Italy, June 18-23, 2001, Proceedings. LNCS, vol.~2083, pp.
  257--271. Springer (2001)

\bibitem{nieuwenhuis2007fast}
Nieuwenhuis, R., Oliveras, A.: Fast congruence closure and extensions.
  Information and Computation  \textbf{205}(4),  557--580 (2007)

\bibitem{NieuwenhuisEtAl06}
Nieuwenhuis, R., Oliveras, A., Tinelli, C.: Solving {SAT} and {SAT} modulo
  theories: From an abstract {D}avis--{P}utnam--{L}ogemann--{L}oveland
  procedure to {DPLL(T)}. Journal of the ACM  \textbf{53},  937--977 (November
  2006)

\bibitem{ReynoldsEtAl18}
Reynolds, A., Barbosa, H., Fontaine, P.: Revisiting enumerative instantiation.
  In: Beyer, D., Huisman, M. (eds.) Tools and Algorithms for the Construction
  and Analysis of Systems - 24th International Conference, {TACAS} 2018, Held
  as Part of the European Joint Conferences on Theory and Practice of Software,
  {ETAPS} 2018, Thessaloniki, Greece, April 14-20, 2018, Proceedings, Part
  {II}. Lecture Notes in Computer Science, vol. 10806, pp. 112--131. Springer
  (2018)

\bibitem{ReynoldsTinelli14}
Reynolds, A., Tinelli, C., de~Moura, L.: Finding conflicting instances of
  quantified formulas in {SMT}. In: 2014 Formal Methods in Computer-Aided
  Design (FMCAD). pp. 195--202 (2014). \doi{10.1109/FMCAD.2014.6987613}

\bibitem{Shostak84}
Shostak, R.E.: Deciding combinations of theories. Journal of the ACM
  \textbf{31}(1),  1--12 (1984)

\bibitem{MSS96}
Silva, J.P.M., Sakallah, K.A.: {GRASP} - a new search algorithm for
  satisfiability. In: International Conference on Computer Aided Design, ICCAD.
  pp. 220--227. IEEE Computer Society Press (1996)

\bibitem{Sutcliffe17}
Sutcliffe, G.: The {TPTP} problem library and associated infrastructure - from
  {CNF} to {TH0}, {TPTP} v6.4.0. J. Autom. Reason.  \textbf{59}(4),  483--502
  (2017)

\bibitem{Weidenbach15}
Weidenbach, C.: Automated reasoning building blocks. In: Meyer, R., Platzer,
  A., Wehrheim, H. (eds.) Correct System Design - Symposium in Honor of
  Ernst-R{\"{u}}diger Olderog on the Occasion of His 60th Birthday, Oldenburg,
  Germany, September 8-9, 2015. Proceedings. Lecture Notes in Computer Science,
  vol.~9360, pp. 172--188. Springer (2015)

\end{thebibliography}

\newpage

\section*{Appendix}
\subsection{Further Examples}

\begin{example}
  Suppose we have the following equations: $g(x)\approx a$, $h(y)\approx a$, $g(h(z))\approx h(h(z))$.
  Initially, without single term classes, we get 
  $$\Pi = \{\{g(x), a\parallel g(x),a\}, \{h(y), a\parallel h(y),a\},
  \{g(h(z)), h(h(z))\parallel g(h(z)), h(h(z))\}\}$$
  Merging the last class with the second class we get 
  $$\{g(h(z)),\allowbreak h(h(z)),\allowbreak  h(y),\allowbreak  a\parallel g(h(z)),\allowbreak  h(h(z)),\allowbreak  h(y),\allowbreak  a\}$$
  We can now merge this new class with the first class to get:
  $$\{g(h(z)),\allowbreak  h(h(z)),\allowbreak  a,\allowbreak  h(y),\allowbreak  
  g(x)\parallel g(h(z)),\allowbreak  h(h(z)),\allowbreak  a,\allowbreak  h(y),\allowbreak  g(x)\}$$
  This class subsumes all other classes, for example in our ordering defined in 
  section~\ref{sec:implementation}. So this would be the final result. Note that 
  this result is independent of the chosen $\beta$. No matter how large $\beta$ is
  the result of the calculus is always the same (assuming that $\beta$ allows for the 
  initial classes).
  \end{example}
  
  Another example where the number of classes is dependent on $\beta$ is shown below.
  \begin{example}
    Suppose we have the single equation $f(x)\approx g(x)$. 
    Initially we have $\Pi = \{\{f(x),g(x)\parallel f(x),g(x)\}\}$.
  Depending on the size of $\beta$ we get more and more classes with the deduction
    rule, like $\{f(f(x)),f(g(x)), f(x), g(x)\parallel f(f(x)),f(g(x))\}$ and 
    $\{g(f(x)),g(g(x)), f(x), g(x)\parallel g(f(x)),g(g(x))\}$. Which we can again merge with 
    the first class to get 
    $$\{f(f(x)),f(g(x)), g(f(x)),g(g(x)), f(x), g(x)\parallel$$ 
    $$f(f(x)),f(g(x)),g(f(x)),g(g(x))\}$$
    Increasing $\beta$ further we get even larger classes. This is an example 
    where we gain quite little compared to congruence closure.
  \end{example}
  
  The following example shows that Merge also has to be applied to two instances of the same class.
  \begin{example}
  Assume initial classes (without single term classes):
  
  $$\Pi = \{\{f(x,y), g(x,y)\parallel f(x,y), g(x,y)\},$$ 
  $$\{f(x,y), g(y,x)\parallel f(x,y), g(y,x)\}\}$$
  
  Now we can merge the classes by unifying $f(x,y)$:
  
  $$\{f(x,y), g(x,y), g(y,x)\parallel f(x,y), g(x,y), g(y,x)\}$$
  
  The new class subsumes both initial classes. To get the final result we have to merge this new class with itself
  by unifying $g(x,y)$ and $g(y,x)$ to get:
  
  $$A=\{f(x,y), g(x,y), g(y,x), f(y,x)\parallel$$ 
  $$f(x,y), g(x,y), g(y,x), f(y,x)\}$$
  
  Otherwise, e.g. $\{f(a,b), f(b,a)\}\not\subseteq A'$ for all $A'\in gnd(A)$ for an appropriate set of function symbols in 
  $\Omega$.
\end{example}

\CCVAR~does not always guarantee fewer or equally many Congruence Classes than \CC. Consider the following example

\begin{example}
  Let $f(a)\approx h(a), g(a)\approx h(a),f(b)\approx h(b), g(b)\approx h(b), f(x)\approx g(x), a\approx f(a), b\approx f(b)$ be some 
  input equations. Further assume 
  $$\mathcal{M} = \{f(a),g(a),h(a),f(b),g(b),h(b),a,b\}$$ 
  
  Initially \CCVAR\ contains a class for the terms in each equation. Note, that Subsumption is not applicable in this state.
  Now, multiple Merge operations are possible, but no matter in which sequence they are applied the result is always
  $$\{\{f(a),h(a),g(a),a\parallel f(a),h(a), g(a), a\}, \{f(b),h(b),g(b),b\parallel f(b),h(b), g(b), b\},$$
  $$\{f(x),g(x)\parallel f(x),g(x)\}\}$$
  Subsumption is not applicable to this set of classes. In ground congruence closure, however, we only get two classes:
  $$\{\{f(a),h(a), g(a), a\}, \{f(b),h(b), g(b), b\}\}$$

\end{example}

\subsection{Pseudo Code}
This chapter shows pseudo code of our implementation. It is intended to give an idea of the implementation and not to
present the concrete details of the implementation. For example, the use of path index and discrimination tree is missing here.

Algorithm~\ref{ccx:mainfunc} shows the initial state and main loop of our implementation. 
There is also an first optimization option here, namely which classes should be picked from the usable queue first.
Our heuristic selects the classes with the fewest terms and the most variables from the usable queue, 
or if the number of terms and variables are equal, then the class with the fewest separating variables.
Various benchmarks have shown that this produces the best results.
\begin{algorithm}
    \caption{Main function of the algorithm}\label{ccx:mainfunc}
    \begin{algorithmic}
    \Function{Main}{$E$}
    \For {all $s\approx t\in E$}
      \State {$C_{new} =$ \textbf{new} CLASS}
      \State $C_{new}{\rightarrow}terms = \{s,t\}$
      \State $C_{new}{\rightarrow}cstrs = \{s,t\}$
      \State us $=$ Push(us, $C_{new}$)
    \EndFor
    \For {all $f\in \Omega$, $arity(f)= n$}
      \State {$C_{new} =$ \textbf{new} CLASS}
      \State $C_{new}{\rightarrow}terms = \{f(x_1,...,x_n)\}$
      \State $C_{new}{\rightarrow}cstrs = \{f(x_1,...,x_n)\}$
      \State wo $=$ Push(wo, $C$)
    \EndFor
    \While{$us\not= \emptyset$}
      \State $C$ $=$ Pop(us)
      \If {Merge($C$) \&\& Deduct($C$)}
        \State wo $=$ Push(wo, $C$)
      \EndIf
    \EndWhile
    \EndFunction
    \end{algorithmic}
    \end{algorithm}

  Algorithm~\ref{ccx:mergefunc} shows the implementation of our merge function.
Here the implementation follows the rules, except that subsumption is checked directly after the new class is created.

\begin{algorithm}
    \caption{Merge function}\label{ccx:mergefunc}
    \begin{algorithmic}
      \Function{Merge}{$C_0$}
      \For{$C_1$ in $wo$}
        \For{each $(t_0,t_1)\in \{(t_0,t_1)~|~t_0\in C_0{\rightarrow}terms\land t_1\in C_1{\rightarrow}terms\}$}
          \If {unifiable($t_0$,$t_1$)}
            \State {$\mu$ $=$ $mgu(t_0,t_1)$}
            \State {$C_{new} =$ \textbf{new} CLASS}
            \State {$C_{new}{\rightarrow} terms = (C_0{\rightarrow} terms\cup C_1{\rightarrow} terms)\mu$}
            \State {$C_{new}{\rightarrow} cstrs ~= (C_0{\rightarrow} cstrs\cup C_1{\rightarrow} cstrs)\mu$}
            \If {SAT($C_{new}{\rightarrow} cstrs$)}
              \State $subsumed = \mathit{FALSE}$
              \For{$C$ in $wo\cup us\cup \{C_0\}$}
                \If {CheckSubsumption($C$, $C_{new}$)}
                  \State $subsumed = \mathit{TRUE}$
                  \State break
                \EndIf
              \EndFor
              \If {$subsumed == \mathit{FALSE}$}
                \For{$C$ in $wo\cup us\cup \{C_0\}$}
                  \If {CheckSubsumption($C_{new}$, $C$)}
                    \State $wo = wo\setminus \{C\}$
                    \State $us = us\setminus \{C\}$
                    \If {$C == C_0$}
                      $subsumed = \mathit{TRUE}$
                    \EndIf 
                  \EndIf
                \EndFor
    
                \State $us =us \cup \{C_{new}\}$
                \If {$subsumed == \mathit{TRUE}$}
                  \State \textbf{return} $\mathit{FALSE}$
                \EndIf
              \EndIf
            \EndIf
          \EndIf
        \EndFor
      \EndFor
      \State \textbf{return} $\mathit{TRUE}$
      \EndFunction
      \end{algorithmic}
    \end{algorithm}

    Algorithms~\ref{ccx:deductfunc} and~\ref{ccx:deductfuncinternal} show the implementation of our deduction function.
    Note that the constraints are always verified as soon as possible in the actual implementation. 
    To keep the pseudocode simple, we only check the satisfiability of the constraints at the end.
    
    \begin{algorithm}
        \caption{Deduction function}\label{ccx:deductfunc}
        \begin{algorithmic}
          \Function{Deduct}{$C_0$}
            \State $wo = wo\cup \{C_0\}$
            \For {each $f\in \Omega$ with $arity(f)=n$ and $n> 0$}
              \State $\mathit{t_0} = f(x_1,...,x_n)$
              \State $\mathit{t_1} = f(y_1,...,y_n)$
              \If {DeductIntern($t_0$, $t_1$, $0$, $arity(f),C_0,$ \textbf{new} CLASS, $\mathit{false}$) $== \mathit{FALSE}$}
                \State $wo = wo\setminus \{C_0\}$
                \State \textbf{return} $\mathit{FALSE}$
              \EndIf
            \EndFor
          \EndFunction
          \end{algorithmic}
      \end{algorithm}  
      \begin{algorithm}
        \caption{Internal Deduction function}\label{ccx:deductfuncinternal}
        \begin{algorithmic}
          \Function{DeductIntern}{$t_0,t_1,i,n,C_0, C_{new}, used$}
            \If {$i\not= n$}
              \For {$C\in \mathit{wo}$}
                \State $C_{new}{\rightarrow} cstrs = C_{new}{\rightarrow} cstrs \cup C{\rightarrow} cstrs$
                \For {$\{s_0,s_1\}\subseteq C{\rightarrow} terms$}
                  \State $t_0 = t_0[s_0]_i$
                  \State $t_1 = t_1[s_1]_i$
                  \If {not DeductIntern($t_0,t_1,i+1,n,C_0, C_{new}, (C == C_0~||~used)$)}
                    \State \textbf{return} $\mathit{FALSE}$
                  \EndIf
                \EndFor
                \State $C_{new}{\rightarrow} cstrs = C_{new}{\rightarrow} cstrs \setminus C{\rightarrow} cstrs$
              \EndFor
            \ElsIf {$used$}
              \State $C_{new}{\rightarrow} terms = \{t_0,t_1\}$
              \State $C_{new}{\rightarrow} cstrs = C_{new}{\rightarrow} cstrs \cup \{t_0,t_1\}$
              \If {SAT($C_{new}{\rightarrow} cstrs$)}
                \For{$C$ in $wo\cup us$}
                  \If {CheckSubsumption($C$, $C_{new}$)}
                    \State \textbf{return} $\mathit{TRUE}$
                  \EndIf
                \EndFor
                \State $subsumed = \mathit{FALSE}$
                \For{$C$ in $wo\cup us$}
                  \If {CheckSubsumption($C_{new}$, $C$)}
                    \State $wo = wo\setminus \{C\}$
                    \State $us = us\setminus \{C\}$
                    \If {$C == C_0$}
                      $subsumed = \mathit{TRUE}$
                    \EndIf 
                  \EndIf
                \EndFor
                \State $us = us \cup \{C_{new}\}$
                \If {$subsumed == \mathit{TRUE}$}
                  \State \textbf{return} $\mathit{FALSE}$
                \EndIf
              \EndIf
            \EndIf
            \State \textbf{return} $\mathit{TRUE}$
          \EndFunction
          \end{algorithmic}  
      \end{algorithm}

      Algorithm~\ref{ccx:subsumptionfunc} shows our implementation of subsumption. The function\\ $BuildLAC(constraints)$
      creates a linear arithmetic constraint as defined in~\ref{ccx:liaconstr} and $LAImplicationTest(LAC1, LAC0)$
      is the implementation of the implication test of the linear arithmetic solver.

    \begin{algorithm}
    \caption{Subsumption function}\label{ccx:subsumptionfunc}
    \begin{algorithmic}
        \Function{CheckSubsumption}{$C_0,C_1$}
        \If {$C_0{\rightarrow}sepvars == 0$}
        \State \textbf{return} CheckSubsumptionFreeVars($C_0,C_1,\{\}$) 
        \Else
        \For{each $(t_0,t_1)\in \{(t_0,t_1)~|~t_0\in C_0{\rightarrow}terms\land t_1\in C_1{\rightarrow}terms\}$}
            \If {exists $\sigma$ s.t. $t_0\sigma = t_1$}
            \State $\sigma = \sigma\setminus\{x\rightarrow t~|~ x\rightarrow t\in \sigma$ and $x$ a free variable$\}$
            \If {CheckSubsumptionFreeVars($C_0,C_1,\sigma$)}
                \State \textbf{return} $\mathit{TRUE}$
            \EndIf
            \EndIf
        \EndFor
        \EndIf
        \State \textbf{return} $\mathit{FALSE}$
        \EndFunction

        \Function{CheckSubsumptionFreeVars}{$C_0,C_1,\sigma$}
        \For{each $t_1\in C_1{\rightarrow}terms$}
            \State $result = \mathit{FALSE}$
            \For{each $t_0\in C_0{\rightarrow}terms$}
            \If {exists $\delta$ s.t. $t_0\sigma\delta = t_1$}
                \State $LAC_0 = BuildLAC(\{t\sigma\delta~|~t\in C_0{\rightarrow}cstrs\})$
                \State $LAC_1 = BuildLAC(C_1{\rightarrow}cstrs)$
                \If {LAImplicationTest($LAC_1, LAC_0$)}
                \State $result = \mathit{TRUE}$
                \State $break$
                \EndIf
            \EndIf
            \EndFor
            \If {$not$ $result$}
            \State \textbf{return} $\mathit{FALSE}$
            \EndIf
        \EndFor
        \State \textbf{return} $\mathit{TRUE}$
        \EndFunction
    \end{algorithmic}
    \end{algorithm}

\subsection{Missing Proofs}
\label{sec:proofs}

\nongndtognd*
\begin{proof}
  For any $B\in \gnd'(A)$ there exists a $\sigma$ such that $B=\{(\Gamma\sigma \parallel s\sigma)~|~ (\Gamma\parallel s) \in A\;$ and $ \Gamma\sigma$ satisfiable$\}$.
  Thus for any $\Gamma\sigma \parallel s\sigma\in B$ we have $(s\sigma{\in}\mathcal{M})\in \Gamma\sigma$. 
  Thus $B$ is $\mathcal{M}$-constrained.
  For any $A'\in \gnd(A)$ we have $A'= \bigcup_{\sigma\;\mathit{grounding\ for}\; B} 
  \{ s\sigma~|~ (\Gamma\parallel s) \in B\;\mathit{and}\;\Gamma\sigma\;\mathit{true}\}$ for some $B\in \gnd'(A)$. 
  Thus for any $s\sigma\in A'$ we have $s\sigma{\in}\mathcal{M}$, since $B$ is $\mathcal{M}$-constrained and 
  $s\sigma$ is ground. Thus $A'\subseteq \mathcal{M}$.
  Thus $\gnd(A)\subseteq \mathcal{P}(\mathcal{M})$.
\end{proof}

\finnumclscomb*
\begin{proof}
Assume there exists such a chain. Let $\mathcal{P}(\mathcal{M})$ be the 
powerset of $\mathcal{M}$. Since $\mathcal{M}$ is finite, $\mathcal{P}(\mathcal{M})$ is finite as well. 
There are only $2^{|\mathcal{M}|}$ different subsets in $\mathcal{P}(\mathcal{M})$. For any $A$ in the
chain it holds $\gnd(A)\in \mathcal{P}(\mathcal{M})$ by Lemma~\ref{lem:nongndtognd}.
If there exist indices $i\not=j$ such that $\gnd(A_i) = \gnd(A_j)$ then $A_j$ subsumes $A_i$ and vice versa contradicting the assumption. 
Thus, for any pair of indices $i\not=j$ in the chain it has to hold $A_i\not=A_j$. But there are only finitely many different subsets. Contradiction.
\end{proof}

\normaltoground*
\begin{proof}
  Let $X$ be the separating and $Y$ be the free variables of $A$.
  Let $\tau$ be the renaming that maps the free variables to 
  fresh variables to create the normal class. 

  Assume $\{s,t\}\subseteq A'$ for some $A'\in gnd(A)$.
  There must exist terms $\{\Gamma\parallel s',\Delta\parallel t'\}\subseteq A$ and substitutions 
  $\sigma:X\rightarrow \mathcal{T}(\Omega,\emptyset), 
  \delta:Y\rightarrow \mathcal{T}(\Omega,\emptyset), \delta':Y\rightarrow \mathcal{T}(\Omega,\emptyset)$ such that $s'\sigma\delta = s$ and 
  $t'\sigma\delta' = t$ by Definition~\ref{def:classgrounding}. Let $\delta' = \{x_1\rightarrow s_1, ...,x_n\rightarrow s_n\}$. 
  Construct $\delta'' = \{x_1\tau\rightarrow s_1, ...,x_n\tau\rightarrow s_n\}$. 
  Now we can construct $\sigma'$ to be $\sigma\delta\delta''$ and we are done, since $\{\Gamma' \parallel s',\Delta'\parallel t'\tau\}\subseteq \text{norm}(A)$ and $s'\sigma'= s$ 
  and $t'\tau\sigma'= t$. 
  
  Now assume there exists a substitution $\sigma'$ such that 
  $\{\Gamma\sigma'\parallel s'\sigma',\Delta\sigma'\parallel t'\sigma'\}\subseteq \text{norm}(A)\sigma'$ and $s'\sigma'=s$ and 
  $t'\sigma'=t$.
  Define $\sigma:X\rightarrow \mathcal{T}(\Omega,\emptyset)$ and $\delta:(Y\cup Y\tau)\rightarrow \mathcal{T}(\Omega,\emptyset)$
  such that $\sigma'= \sigma\delta$. Let $\tau'$ be the reverse renaming of $\tau$. 
  Let $\delta = \{x_1\rightarrow s_1,...,x_n\rightarrow s_n, y_1\rightarrow t_1,...,y_m\rightarrow t_m\}$, where the $y_i$ are the 
  fresh variables introduced by $\tau$. Construct $\delta'=\{x_1\rightarrow s_1,...,x_n\rightarrow s_n\}$
  and $\delta''=\{y_1\tau'\rightarrow t_1,...,y_m\tau'\rightarrow t_m\}$. Now there must exist 
  $\{\Gamma\tau'\parallel s'\tau',\Delta\tau'\parallel t'\tau'\}\subseteq A$ such that 
  ($s'\tau'\sigma'\delta' = s$ or $s'\tau'\sigma'\delta'' = s$) and ($t'\tau'\sigma'\delta'' = t$ or $t'\tau'\sigma'\delta' = t$).
\end{proof}

\groundsound*
\begin{proof}
Let $Y'$ be the free variables of s. Choose $y'\in Y'$. There must exist a $\Gamma\parallel t\in A$ s.t. $y'\not\in vars(t)$.
By assumption $gnd_{\mathcal{M}}(E)\models s\sigma\delta\approx t\sigma\delta$. Now let 
$\delta''$ be $\delta$ but $y'$ maps to $\alpha$. Then $gnd_{\mathcal{M}}(E)\models s\sigma\delta\approx t\sigma\delta''$
since $t\sigma\delta=t\sigma\delta''$, and by hypothesis $gnd_{\mathcal{M}}(E)\models t\sigma\delta''\approx s\sigma\delta''$.
The constraints are still satisfiable since $\mathcal{M}$ is $\Pi$-closed. Now we can continue analogously for 
$s\sigma\delta''$ until we reach $s\sigma\delta'$ which shows the Lemma.
\end{proof}

\nonnormtonorm*
\begin{proof}
  Follows from Lemma~\ref{lem:normaltoground} and~\ref{lem:groundsound}.
\end{proof}

\cccsound*
\begin{proof}
It suffices to show this Lemma for all $\{\Gamma\parallel s,\Delta\parallel t\}\subseteq A$. Since constraints are always the same for each class in a run 
it follows 
for all $\{\Gamma\parallel s,\Delta\parallel t\}\subseteq \text{norm}(A)$ by Corollary~\ref{lem:nonnormtonorm}.
Proof by induction. Initially $\Pi$ is such that $\{s,t\parallel s, t\}\in \Pi$ for all $s\approx t\in E$. 
Now assume a grounding substitution $\sigma$ such that 
$s\sigma\in\mathcal{M}$ and $t\sigma\in\mathcal{M}$. Then $(s\approx t)\sigma\in \gnd_{\mathcal{M}}(E)$.
The initial $\{f_i(x_{1_i},\ldots,x_{k_i})\in\mathcal{M}\parallel f_i(x_{1_i},\ldots,x_{k_i})\}$ are single term classes.
So the assumption holds. Now assume that the assumption holds for $\Pi$ and we apply a rule:

\noindent
\textbf{1)} assume that Subsumption is applied. Then $B$ subsumes $A$. Thus for all $A'\in gnd(A)$ there 
        exists a $B'\in gnd(B)$ such that $A'\subseteq B'$.
        Thus we only remove redundant classes, so the assumption holds by i.h.
        
        \noindent
\textbf{2)} assume that $\mathit{Merge}$ is applied. Then there exists $\Gamma\parallel s\in A,\Delta\parallel t\in B$ and $\mu=mgu(s,t)$.
        We show that $\gnd_{\mathcal{M}}(E)\models s'\mu\sigma\approx t'\mu\sigma$ for all 
        $\{\Gamma'\mu \parallel s'\mu,\Delta'\mu\parallel t'\mu\}\subseteq (A'\cup B')\mu$ and
        grounding $\sigma$ such that $\Gamma'\mu\sigma$ and $\Delta'\mu\sigma$ satisfiable. 
        Let $\sigma' = \mu\sigma$. By i.h. $\gnd_{\mathcal{M}}(E)\models s'\sigma'\approx s\sigma'$
        for $\Gamma''\parallel s'\in \text{norm}(A)$, since $\Gamma''\subseteq\Gamma'$ and $\Gamma\subseteq\Gamma'$,
        and analogously for $t'\sigma'\approx t\sigma'$. Now we have $s'\sigma' \approx s\mu\sigma = t\mu\sigma \approx t'\sigma'$.
        Thus by transitivity of equality it holds $\gnd_{\mathcal{M}}(E)\models s'\mu\sigma\approx t'\mu\sigma$.

\noindent        
\textbf{3)} assume that $\mathit{Deduction}$ is applied. By i.h. $\gnd_{\mathcal{M}}(E)\models s'_i\sigma\approx t'_i\sigma$ for all
        grounding $\sigma$ such that $\Gamma_i\sigma$ and $\Delta_i\sigma$ are satisfiable and $1\leq i\leq n$. 
        Thus, by congruence of equality, $\gnd_{\mathcal{M}}(E)\models 
        f(s'_1,...,s'_n)\sigma\approx f(t'_1,...,t'_n)\sigma$ for all $\sigma$ such that $\Gamma'\sigma$ satisfiable,
        since $\Gamma' = \Gamma\land \Delta\land \Gamma_1\land...\land \Gamma_n \land \Delta_1\land...\land \Delta_n$.
        Thus $\gnd_{\mathcal{M}}(E)\models 
        f(s'_1,...,s'_n)\mu\sigma\approx f(t'_1,...,t'_n)\mu\sigma$ for all $\sigma$ such that $\Gamma'\mu\sigma$ satisfiable,
        since $f(s'_1,...,s'_n)\mu$ and $f(t'_1,...,t'_n)\mu$ are instances of $f(s'_1,...,s'_n)$ and $f(t'_1,...,t'_n)$.
      \end{proof}

\cccterm*
\begin{proof}
  By Lemma~\ref{lem:finnumclscomb} there are only finitely many possible $\mathcal{M}$-constrained classes that are not subsumed. Since all terms are constraint by $\mathcal{M}$ in 
  $\Rightarrow_{\CCVAR}$ and $\mathit{Merge}$ and $\mathit{Deduction}$ check if the new class is subsumed by another class, they can be applied only finitely often.
  A class removed by $\mathit{Subsumption}$ cannot be added again by $\mathit{Merge}$ or $\mathit{Deduction}$ since it is subsumed by another class in $\Pi$.
  Thus $\Rightarrow_{\CCVAR}$ is terminating.
\end{proof}
      
\ccccomp*
\begin{proof}
  We show by induction for any sequence $E_1\Rightarrow_{EQ} ...\Rightarrow_{EQ} E_n$ of applications of $\Rightarrow_{EQ}$ 
  with $E_1=\gnd_{\mathcal{M}}(E)$ there exists a sequence $\Pi_0\Rightarrow_{\CCVAR}...\Rightarrow_{\CCVAR} \Pi_m$ of applications of 
  $\Rightarrow_{\CCVAR}$ rules such that for all $s\approx t\in E_n$,
  where $s\in\mathcal{M}$ and $t\in\mathcal{M}$, 
  there exists a class $A\in \Pi_m$, $\Gamma\parallel l\in \text{norm}(A),\Delta\parallel r\in \text{norm}(A)$ and a 
  substitution $\sigma$ such that $l\sigma\approx r\sigma = s\approx t$, $\Gamma\sigma,\Delta\sigma$ satisfiable.
  Initially, this is true, since $\{s,t\parallel s, t\}\in \Pi_0$ for all $s\approx t\in E$.
  Since reasoning is based on ground terms only, we can ignore the $\mathit{Instance}$ rule of $\Rightarrow_{EQ}$.
  Now assume we are in step $i$.

  \noindent  1) Reflexivity. $E_i\Rightarrow_{EQ} E_i\cup \{t\approx t\}$, $t=f(s_1,\ldots,s_n)\in\mathcal{M}$. Then $t\in(\{f_i(x_{1_i},\ldots,x_{k_i})\in\mathcal{M}\parallel f_i(x_{1_i},\ldots,x_{k_i})\}\sigma)$
    for $\sigma = \{ x_{j_i}\mapsto s_j ~|~ 1\leq j\leq n\}$.

    \noindent  2) Symmetry. $E_i\cup \{t\approx t'\}\Rightarrow_{EQ} E_i\cup \{t\approx t', t'\approx t\}$, $t,t'\in\mathcal{M}$. 
    Then by i.h. there exists a class $A\in \Pi$ such that $\{l\sigma,r\sigma\}\subseteq \text{norm}(A)\sigma$ and $(l\approx r)\sigma= 
          t\approx t'$ for some substitution $\sigma$ because $\text{norm}(A)$ is normal. But then also $(r\approx l)\sigma= t'\approx t$.

          \noindent  3) Transitivity. $E_i = E'_i \cup \{s\approx t \land t\approx s'\}\Rightarrow_{EQ} E_i\cup \{s\approx s'\}=E_{i+1}$, $s,s',t\in\mathcal{M}$. 
          By hypothesis there must exist $\{A,B\}\subseteq \Pi$, 
          $\{\Gamma\parallel l,\Delta\parallel r\}\subseteq \text{norm}(A)$, $\{\Gamma'\parallel l',\Delta'\parallel r'\}\subseteq \text{norm}(B)$, 
          a grounding substitution $\sigma$, such that $(l\approx r)\sigma = s\approx t$ and 
          $(l'\approx r')\sigma = t\approx s'$ and $\Gamma\sigma, \Delta\sigma,\Gamma'\sigma,\Delta'\sigma$ all satisfiable. 
          If $(A'\cup B')\mu$ is subsumed by some $C\in \Pi$, then we 
          are already done, since there exists a $C'\in gnd(C)$ such that $\{s,s'\}\subseteq C'$.
          Otherwise $\mathit{Merge}$ is applicable and $(A'\cup B')\mu$ added to the state. 
          Then 
          $\{\Gamma\land \Delta\land \Gamma'\parallel l,\Delta\land \Gamma'\parallel r,
          \Delta\land \Gamma'\parallel l',\Delta'\land \Delta\land \Gamma'\parallel r'\}\mu\subseteq (A'\cup B')\mu$
          by the definition of Merge. Obviously, $(\Gamma\land \Delta\land \Gamma'\land \Delta')\mu$ is satisfiable since 
          $(\Gamma\land \Delta\land \Gamma'\land \Delta')\sigma$ is satisfiable. Thus there exists a $\sigma'$ such that 
          $\{s,s'\}\subseteq (A'\cup B')\mu\sigma'$.

          \noindent  4) Congruence. $E_i=E_i'\cup \{s_1\approx t_1,...,s_{m}\approx t_{m}\}\Rightarrow_{EQ} E_i\cup \{f(s_1,...,s_m)\approx f(t_1,...,t_m)\}=E_{i+1}$, 
          where $f(s_1,...,s_m), f(t_1,...,t_m)\in\mathcal{M}$. By hypothesis there must 
    exist $\{A_1,...,A_m\}\subseteq \Pi$, 
    $\{\Gamma_i\parallel l_i,\Delta_i\parallel r_i\}\subseteq \text{norm}(A_i)$ because $\text{norm}(A_i)$ is normal and a grounding substitution $\sigma$ such that $(l_i\approx r_i)\sigma = s_i\approx t_i$
    and $\Gamma_i\sigma,\Delta_i\sigma$ satisfiable. Now take two renamed copies of $\{f_i(x_{1_i},\ldots,x_{k_i})\in\mathcal{M}\parallel f_i(x_{1_i},\ldots,x_{k_i})\}$
    and grounding substitutions 
    $\sigma_1,\sigma_2$ such that $f(x'_1,...,x'_m)\sigma_1 = f(s_1,...,s_m)$ and $f(y'_1,...,y'_m)\sigma_2 = f(t_1,...,t_m)$ where the respective constraints are obviously satisfied. 
    Thus there must exist a 
    simultaneous most general unifier $\mu$ as defined in the Deduction rule. If $\{\Gamma'\mu\parallel f(l_1,...,l_m)\mu,\Gamma'\mu\parallel f(r_1,...,r_m)\mu\}$ is subsumed by some $C\in \Pi$, then we 
    are already done, since there exists a $C'\in gnd(C)$ such that $\{f(s_1,...,s_m), f(t_1,...,t_m)\}\subseteq C'$. 
    Otherwise Deduction is applicable and $\{\Gamma'\mu\parallel f(l_1,...,l_m)\mu,\Gamma'\mu\parallel f(r_1,...,r_m)\mu\}$ added to the state. 
    Since $\Gamma'\sigma$
    is satisfiable $\{f(s_1,...,s_m), f(t_1,...,t_m)\}\subseteq \{\Gamma'\mu\parallel f(l_1,...,l_m)\mu,\Gamma'\mu\parallel f(r_1,...,r_m)\mu\}\sigma$ has to hold.
\end{proof}

\dedstc*
\begin{proof}
  We build a unifier $\mu' = \{x_1\rightarrow s'_1,..., x_n\rightarrow s'_n, y_1\rightarrow t'_1,..., y_n\rightarrow t'_n\}$ for 
  the single term class and the renamed single term class  $\{f(y_1,...,y_n)\parallel f(y_1,...,y_n)\}$.
  Then $\Gamma''=f(x_1,...,x_n)\in\mathcal{M}\land f(y_1,...,y_n)\in\mathcal{M}\land \Gamma_1\land ...\land \Gamma_n\land \Delta_1\land ...\land 
  \Delta_n$. The resulting class is thus $\{\Gamma''\parallel f(s'_1,...,s'_n), f(t'_1,...,t'_n)\}\mu'$.
\end{proof}


We need the following auxiliary Lemma for the Proof of Lemma~\ref{lem:subsumptionmatch}.

\begin{lemma}\label{lem:subsummatching}
  Let $A, B$ be classes. If $B$ subsumes $A$ by matching then $B$ subsumes $A'$ by matching 
  for all $A'\in gnd(A)$.
\end{lemma}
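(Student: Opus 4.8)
The plan is to lift the matching witness for $A$ through the grounding that produces $A'$, the crucial point being that separating variables of $B$ can only be instantiated by the witness to terms built over the separating variables of $A$. First I would unpack Definition~\ref{def:submatch}: since $B$ subsumes $A$ by matching, fix the witnessing substitution $\sigma_0 : X_B \to \term(\Omega, \vars(A))$ on the separating variables $X_B$ of $B$, and for each constrained term $\Gamma_t \parallel t \in A$ fix a $\tau_t : Y_B \to \term(\Omega, \vars(A))$ on the free variables $Y_B$ of $B$ together with a term $\Delta_t \parallel s_t \in B$ such that $t = s_t\sigma_0\tau_t$ and $\forall\delta.(\Gamma_t\delta \to \exists\delta'.\Delta_t\sigma_0\tau_t\delta\delta')$ hold. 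By Definition~\ref{def:classgrounding}, a given $A' \in \gnd(A)$ is obtained from one fixed ground substitution $\rho$ with $\dom(\rho)=X_A$ (the separating variables of $A$) and, for each surviving term, a grounding $\pi_t$ of the free variables $Y_A$, so that each $u \in A'$ has the form $u = t\rho\pi_t$ with $\Gamma_t\rho\pi_t$ true.

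The key structural step, which I expect to be the main obstacle, is to show $\vars(x\sigma_0) \subseteq X_A$ for every separating variable $x \in X_B$. I would argue as follows: let $z \in \vars(x\sigma_0)$; since $\sigma_0$ has codomain $\term(\Omega,\vars(A))$, we have $z \in \vars(A)$. Pick any term $t$ of $A$; then $t = s_t\sigma_0\tau_t$, and as $x$ is separating it occurs in $s_t$, so $x\sigma_0$, and hence $z$, occurs in $s_t\sigma_0$; since $\tau_t$ has domain $Y_B$, disjoint from $\vars(A)$ by variable-disjointness of classes, applying $\tau_t$ does not erase the occurrence of $z$, so $z$ occurs in $t$. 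As $t$ was arbitrary, $z$ occurs in every term of $A$, i.e.\ $z \in X_A$. This is exactly what guarantees that grounding the separating variables of $A$ uniformly by $\rho$ also grounds $x\sigma_0$ uniformly, independently of the per-term choices $\pi_t$.

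Next I would exhibit the witness for $B$ subsuming $A'$ by matching. Since $A'$ is ground, its separating and free variable sets are empty, and I would take $\sigma' := \sigma_0\rho$; by the structural step $x\sigma'$ is ground for every $x \in X_B$, so $\sigma' : X_B \to \term(\Omega,\emptyset)$. For each $u = t\rho\pi_t \in A'$ I would set $\tau' := (\tau_t\rho\pi_t)|_{Y_B}$, which is ground because $\tau_t$ maps into $\term(\Omega,\vars(A))$ and $\rho\pi_t$ grounds all of $\vars(A)$. It remains to check $u = s_t\sigma'\tau'$ by a position-by-position comparison of $s_t\sigma_0\rho\tau'$ and $s_t\sigma_0\tau_t\rho\pi_t$: at a separating variable $x$ of $B$ both sides equal the ground term $x\sigma_0\rho$ (using the structural step, so $\tau_t$ and $\pi_t$ leave it unchanged); at a free variable $y$ of $B$ both sides equal $y\tau_t\rho\pi_t = y\tau'$; hence the two terms coincide and equal $u$.

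Finally I would discharge the constraint side condition, which for ground $A'$ amounts to showing $\Delta_t\sigma'\tau'$ satisfiable for the matched term. Because $u \in A'$, the constraint $\Gamma_t\rho\pi_t$ is true, so instantiating the hypothesis $\forall\delta.(\Gamma_t\delta \to \exists\delta'.\Delta_t\sigma_0\tau_t\delta\delta')$ at $\delta = \rho\pi_t$ yields that $\Delta_t\sigma_0\tau_t\rho\pi_t$ is satisfiable. The same position comparison as above (now also covering any auxiliary variables occurring in $\Delta_t$ but in no term of $B$, which all four substitutions leave untouched and which the existential $\delta'$ can bind) gives $\Delta_t\sigma'\tau' = \Delta_t\sigma_0\tau_t\rho\pi_t$, so $\Delta_t\sigma'\tau'$ is satisfiable as required, completing the verification that $\sigma'$ witnesses $B$ subsuming $A'$ by matching.
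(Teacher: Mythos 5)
Your proposal is correct and follows essentially the same route as the paper's proof: both hinge on showing that the matching witness sends the separating variables of $B$ to terms over the separating variables of $A$ (you argue this directly via occurrence in every matched term, the paper by a terse contradiction), and both then lift the witness by composing it with the grounding substitutions, you in one combined step with $\rho$ and the per-term $\pi_t$, the paper in two stages through $\gnd'$ and then $\gnd$. Your version is more explicit about the position-by-position substitution equalities and the auxiliary constraint variables, but the underlying argument is the same.
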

\begin{proof}

  Assume $B$ subsumes $A$ by matching. By assumption there exists a 
  substitution $\sigma$ such that for all $\Gamma\parallel t\in A$ there exists a $(\Delta\parallel s)\sigma\in B\sigma$ and $\tau$ such
  that $t=s\sigma\tau$ and $\forall\delta. (\Gamma\delta\rightarrow\exists \delta'.\Delta\sigma\tau\delta\delta')$. $\sigma$ matches all separating variables of $B$ to terms containing only separating variables
  of $A$. If not, then every $t\in A$ contains a free variable in $cdom(\sigma)$. But then these are separating variables.
  Contradiction.
  
  Let $A'\in gnd'(A)$ and $\mu:X\rightarrow 
  \mathcal{T}(\Omega)$ be the substitution such that $A\mu = A'$, where $X$ are the separating variables of $A$. 
  Now construct substitution $\sigma' = \sigma\mu$. Then
  for all $(\Gamma\parallel t)\mu\in A'$ there exists a $(\Delta\parallel s)\sigma'\in B\sigma'$ and $\tau' = \tau\mu$ such
  that $t\mu=s\sigma'\tau'$ and $\forall\delta. (\Gamma\mu\delta\rightarrow\exists \delta'.\Delta\sigma'\tau'\delta\delta')$,
  since $s\sigma'\tau'=s\sigma\mu\tau\mu = s\sigma\tau\mu$.

  Now let $A''\in gnd(A')$. We have $gnd(A') = \{A''\}$. Then there exist grounding substitutions $\sigma_1,...,\sigma_n$ such that
  $A'' = A'\sigma_1\cup...\cup A'\sigma_n$. Now, construct $\sigma'' = \sigma'$. Then
  for all $(\Gamma \parallel t)\mu\sigma_i\in A''$ there exists a $(\Delta\parallel s)\sigma''\in B\sigma''$ and $\tau'' = \tau'\sigma_i$ such
  that $t\mu\sigma_i=s\sigma''\tau''$ and $\forall\delta. (\Gamma\mu\sigma_i\delta\rightarrow\exists \delta'.\Delta\sigma''\tau''\delta\delta')$,
  since $s\sigma''\tau''=s\sigma\mu\tau\mu\sigma_i = s\sigma\tau\mu\sigma_i$.

\end{proof}

\subsumptionmatch*
\begin{proof}

Assume that $B$ does not subsume $A$. Then there exists an $A'\in gnd'(A)$ and 
an $A''\in gnd(A')$ such that there exists no $B'\in gnd(B)$ such that $A''\subseteq B'$.
Now assume that 
there exists a $\sigma$ such that for any $(\Gamma\parallel t)\tau'\in A'', \Gamma\parallel t\in A'$ there is a 
$(\Delta\parallel s)\sigma\in B\sigma$ and $\tau$ with $s\sigma\tau = t\tau'$ and
$\forall\delta. (\Gamma\tau'\delta\rightarrow\exists \delta'.\Delta\sigma\tau\delta\delta')$.
So there exist $\tau_1,...,\tau_n$ such that 
$A''\subseteq B\sigma\tau_1\cup...\cup B\sigma\tau_n$.
Obviously, the free variables of $B$ are also free variables of $B\sigma$.
$B\sigma$ has no separating variables, otherwise $B\sigma\tau_i$ would not be ground
for all $1\leq i \leq n$. Thus $gnd(B\sigma) = \{B''\}$ and 
$B\sigma\tau_1\cup...\cup B\sigma\tau_n \subseteq B''$. Thus, $A''\subseteq B''$, contradicting 
assumption.
Thus, by Lemma~\ref{lem:subsummatching} $B$ cannot subsume $A$ by matching.

\end{proof}

\reduceconstraints*
\begin{proof}
  A constraint is satisfiable with respect to the symbol counting order, if it is
  satisfiable by substituting a constant for all variables. Concerning the semantics
  of classes, variables only occurring in the constraint do not play a role as long
  as the constraint is satisfied. Thus $\gnd(A) = \gnd(A\sigma)$.
\end{proof}

\corcoabstr*
\begin{proof}
  by applying the definitions
\end{proof}

\cotcoelim*
\begin{proof}
Follows from Lemma~\ref{lem:corcoabstr} and the above observation that the quantifier alternation can be removed.
\end{proof}

\end{document}